\documentclass[sigconf, screen]{acmart}

\acmSubmissionID{407}

\usepackage{hyperref}
\usepackage{latexsym}
\usepackage{mathrsfs}
\usepackage{multirow}
\usepackage{amsmath}
\usepackage{amsfonts}
\usepackage{subfigure}
\usepackage{graphicx}
\usepackage{tabularx}
\usepackage[boxed,ruled,lined]{algorithm2e}
\usepackage{algorithmic}
\usepackage{comment}
\usepackage{balance}
\usepackage{bm}
\usepackage[inline]{enumitem}
\usepackage[skip=1mm]{caption}

\usepackage{acronym}
\usepackage{subcaption}
\newtheorem{theorem}{Theorem}

\newcommand{\ie}{\emph{i.e., }}

\newcommand{\wrt}{\emph{w.r.t. }}

\acrodef{IR}{information retrieval}

\usepackage{amsmath,amsfonts,bm}

\def\eqref#1{equation~\ref{#1}}
\def\1{\bm{1}}

\DeclareMathAlphabet{\mathsfit}{\encodingdefault}{\sfdefault}{m}{sl}
\SetMathAlphabet{\mathsfit}{bold}{\encodingdefault}{\sfdefault}{bx}{n}

\DeclareMathOperator*{\argmax}{arg\,max}

\DeclareMathOperator{\sign}{sign}

\author{Chen Xu}
\orcid{https://orcid.org/0000-0002-3070-9358}
\affiliation{%
  \institution{Gaoling School of Artificial Intelligence\\Renmin University of China}
  \city{Beijing}
  \country{China}
}
\email{xc\_chen@ruc.edu.cn}

\author{Wei Chu}
\orcid{https://orcid.org/0009-0004-4534-8075}
\affiliation{%
  \institution{\mbox{Gaoling School of Artificial Intelligence}\\Renmin University of China}
  \city{Beijing}
  \country{China}
}
\email{wchu4112@gmail.com}

\author{Wenyu Hu}
\orcid{https://orcid.org/0009-0005-7130-9894}
\affiliation{%
  \institution{\mbox{School of Mathematics}\\Renmin University of China}
  \city{Beijing}
  \country{China}
}
\email{huwenyu\_123@ruc.edu.cn}

\author{Fengran Mo}
\orcid{0000-0002-0838-6994}
\affiliation{%
  \institution{\mbox{University of Montreal}}
  \city{Quebec}
  \country{Canada}
}
\email{fengran.mo@umontreal.ca}

\author{Jun Xu}
\orcid{https://orcid.org/0000-0001-7170-111X}
\authornote{Jun Xu is the corresponding author. }
\affiliation{%
    \institution{\mbox{Gaoling School of Artificial Intelligence}\\Renmin University of China}
  \city{Beijing}
  \country{China}
}
\email{junxu@ruc.edu.cn}
\author{Maarten de Rijke}
\orcid{https://orcid.org/0000-0002-1086-0202}
\affiliation{
  \institution{University of Amsterdam}
  \city{Amsterdam}
  \country{The Netherlands}
}  
\email{m.derijke@uva.nl}

\renewcommand{\shortauthors}{Chen Xu et al.}

\copyrightyear{2026}
\acmYear{2026}
\setcopyright{cc}
\setcctype{by}
\acmConference[SIGIR '26]{Proceedings of the 49th International ACM SIGIR Conference on Research and Development in Information Retrieval}{July 20--24, 2026}{Melbourne, VIC, Australia}
\acmBooktitle{Proceedings of the 49th International ACM SIGIR Conference on Research and Development in Information Retrieval (SIGIR '26), July 20--24, 2026, Melbourne, VIC, Australia}
\acmDOI{10.1145/3805712.3809614}
\acmISBN{979-8-4007-2599-9/2026/07}

\ccsdesc[500]{Information systems~Information retrieval}

\keywords{Re-ranking, Fairness, Attention economics, Manifold}

\title{Equivalence of Fair Re-ranking and Walrasian Equilibrium: A Manifold Perspective}
\title{The Attention Market: Interpreting Online Fair Re-ranking as Manifold Optimization under Walrasian Equilibrium}

\begin{document}

\begin{abstract}
Fair re-ranking aims to promote long-tail items and enhance diversity within groups in information retrieval. 
While previous research on online fairness-aware re-ranking has shown promising outcomes, our comprehensive evaluation of online fair re-ranking methods over 20 settings reveals significant performance disparities among existing methods. To uncover the root causes of these inconsistencies, we reformulate fair re-ranking within an attentional market framework governed by a Walrasian Equilibrium, where the fairness is treated as a taxation cost. This market-based formulation is then coupled with manifold optimization, demonstrating that seeking this equilibrium is equivalent to performing gradient descent on a specific ranking manifold constructed by the market. 
Different re-ranking settings induce distinct manifold geometries, and these intrinsic geometric differences dictate the gradient landscapes and optimization trajectories. 
We propose ManifoldRank, an efficient online fair re-ranking algorithm. ManifoldRank adjusts gradients to align with the ranking manifold, considering various contextual settings. On the supply side, it incorporates a gradient adjustment based on different fairness requirements, accounting for associated costs. On the demand side, it empirically predicts an additional gradient adjustment term derived from the ranking scores. By integrating these two gradient adjustments, ManifoldRank effectively balances fairness and accuracy. Experimental results across multiple datasets confirm ManifoldRank's effectiveness.
\end{abstract}

\maketitle

\section{Introduction}\label{sec:intro}

Fair re-ranking plays a crucial role in promoting long-tail items and enhancing diversity within groups in information retrieval (IR)~\cite{lifairness, fairrec}. Generally, re-ranking algorithms operate under either offline or online paradigms. Offline solutions typically optimize ranking slots based on global user information, thereby achieving better performances~\cite{TaxRank, TaoSIGIRAP}. However, real-world industrial systems operate under strict latency constraints, which necessitate the use of online solutions. To achieve this, prior work often uses online gradient descent to dynamically adjust ranking scores based on historical interaction data~\cite{xu2023p, ElasticRank, cpfair}.

While previous work has demonstrated promising results, evaluations are often restricted to limited settings, such as a narrow range of datasets or base models. To investigate the stability and robustness of online fair re-ranking approaches, we conduct a comprehensive evaluation spanning over 20 re-ranking settings, including the seven most widely used datasets and 3 distinct base ranking models. Within each setting, we tuned over 50 times across a sweep of accuracy-fairness trade-off parameters. Detailed experimental settings are provided in Section~\ref{sec:exp_settings}. Figure~\ref{fig:intro} details the evaluation results, with models plotted on the x-axis and performance ranking on the y-axis (where lower values indicate better performance). The box plots depict the performance distribution, including the median, quartiles, and range. These results reveal a significant lack of consistency in algorithmic performance and the sensitivity of existing online approaches to fair re-ranking to diverse environments.

\begin{figure}[h]
    \centering    
    \includegraphics[width=\linewidth]{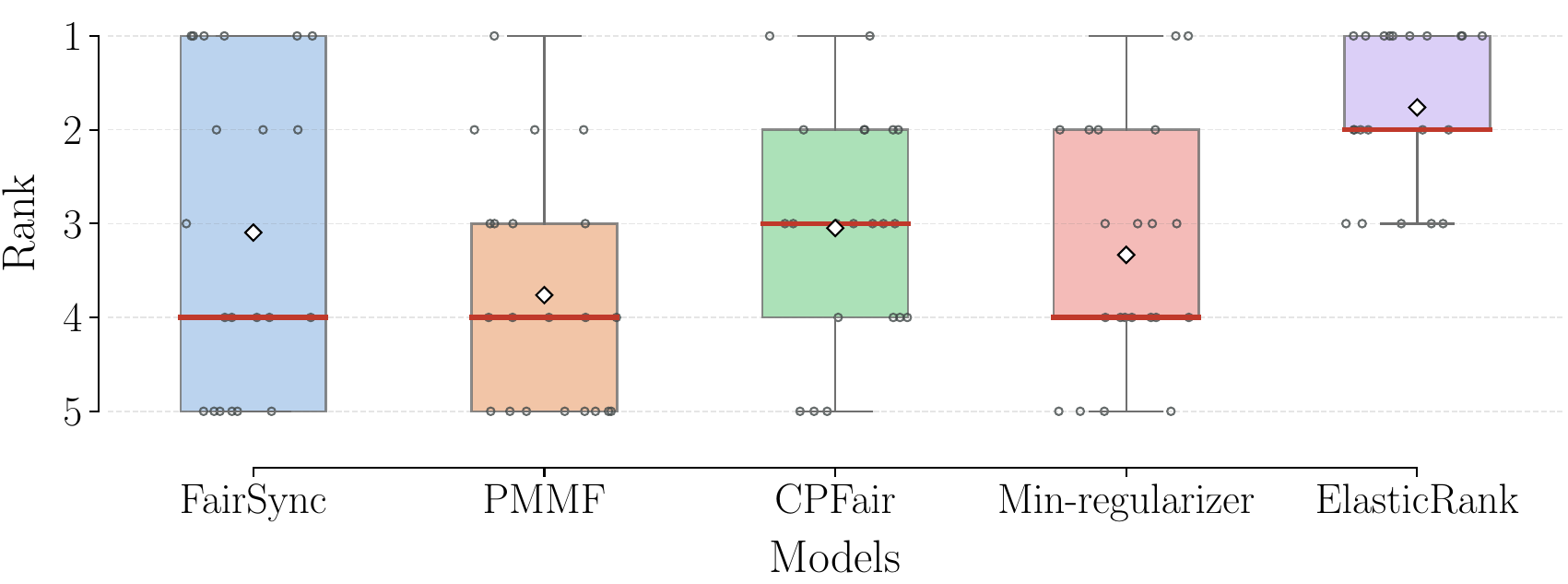}
    \vspace*{-1mm}
    \caption{Performance rankings (lower is better) of fair online re-ranking models under multiple re-ranking settings, including seven datasets and three base models.}
    \label{fig:intro}
    %\vspace{-0.5cm}
\end{figure}

\smallskip\noindent%
\textbf{Theoretical analysis.} Drawing from economic principles, we will apply the concept of attention economics to examine this issue.
Attention economics is a theoretical framework that treats human attention as a scarce and valuable commodity that must be efficiently allocated~\citep{simon-1971-designing}.
Prior work~\cite{TaxRank, ElasticRank} has adopted this framework to model fairness as a form of economic taxation.
We expand on this line of work and draw a parallel between the traditional economic market and the fair re-ranking market, see Table~\ref{tab:market_comparison}.
In our formulation, user satisfaction functions as the market demand, where users seek to maximize the utility of their limited attention. Conversely, content provider fairness is modeled as a taxation mechanism, introducing costs for producing attention that the user needs.
The optimal re-ranking score acts as the equilibrium price balancing these forces, a so-called Walrasian Equilibrium~\cite{arrow2024existence}. Crucially, we establish that seeking this equilibrium is equivalent to gradient descent on a specific re-ranking manifold. 
Consequently, performance inconsistency is explained by the fact that different re-ranking settings induce distinct manifold geometries and optimization trajectories. 
That is, we attribute the root cause of the performance inconsistency to the neglect of the underlying geometric properties of the optimization landscape in economics. 
See Section~\ref{sec:theory} for a detailed theoretical analysis of such a comparison.

\begin{table}[t]
    \centering
    \caption{Analogy between the economic market and the fair re-ranking market (attention economy).}
    \label{tab:market_comparison}
    \begin{tabularx}{\linewidth}{@{} XX @{}}
        \toprule
        \textbf{Economic market} & \textbf{Fair re-ranking market} \\
        \midrule
        Buying products (\textit{demand}) & Consuming attention (\textit{users}) \\
        Producing products (\textit{supply}) & Producing attention (\textit{providers}) \\
        Binding income & Limited attention \\
        Taxation mechanism & Fairness function \\
        Selling price & Re-ranking score \\
        Seeking equilibrium & Optimizing re-ranking  \\
        Market context & Manifold geometries \\
        \bottomrule
    \end{tabularx}
    \vspace*{-2mm}
\end{table}

%Here is the motivation in Figure~\ref{fig:demand_supply}: Sub-figure (a) shows the typical economic Equilibrium: demand side will less likely to decrease the consuming volumn ($Q$) if the price $P$ is higher, but the supplier will increase the production volume. The optimal price is the Equilibrium point of two plots.

%Compared to the traditional economic market, the fair re-ranking market (shown in Sub-figure (b)), user satisfaction serves as the demand function for attention, while content providers represent the supply side, competing for that attention subject to varying taxation costs determined by fairness constraints. The optimal re-ranking score (price) is also the Equilibrium point of two plots. 
%Crucially, we establish a link to manifold optimization: seeking this market equilibrium is isomorphic to gradient descent on a specific ranking manifold. Thus, the inconsistency is explained by the fact that different re-ranking settings define distinct manifold geometries. Detailed theoretical analysis is in Section XXX.

\smallskip\noindent%
\textbf{Empirical analysis.} To further investigate how different settings influence manifold geometries and optimization trajectories, we conduct an ordinal regression analysis to identify which factors of the ranking scores significantly contribute to algorithm performance inconsistency. Given the difficulty of directly modeling the manifold induced by ranking scores, we instead used statistical features of the ranking score matrix as proxy variables. Our results indicate a strong correlation between the entropy and skew of the scores and the gradient step size. Consequently, we propose to calibrate the gradient magnitude in our online fair re-ranking algorithm by jointly considering the fairness (supply side) and statistical features of ranking scores (demand side). 
%A detailed analysis can be found in Section \mdr{XXX}.

Building on this observation, we propose ManifoldRank, an efficient online fair re-ranking algorithm that aligns optimization gradients with the ranking manifold by incorporating two adjustment terms. The algorithm first introduces a fairness-based gradient term on the supply side to identify the equilibrium direction. Second, it empirically estimates a gradient step adjustment derived from the statistical features of the scores to fit the underlying manifold structure from the demand side. By combining these two terms, ManifoldRank achieves both efficient and effective fair re-ranking. Extensive experiments verify the effectiveness of the ManifldRank method across diverse settings. Our source code and all baselines are integrated into the toolkit FairDiverse~\cite{xu2025fairdiverse} in GitHub~\url{https://github.com/XuChen0427/FairDiverse}.

\smallskip\noindent%
\textbf{Main contributions.}
We summarize our major contributions:
\begin{enumerate}[leftmargin=*]
    \item We theoretically frame fair re-ranking as a Walrasian Equilibrium within an attention economy. We demonstrate that the process of seeking this market equilibrium can be viewed as a process analogous to gradient descent on a ranking manifold.

    \item We propose an online fair re-ranking algorithm, ManifoldRank, which aligns optimization gradients with the manifold structure by incorporating two adaptive adjustment fairness terms.

    \item Extensive empirical evaluations on seven publicly available ranking datasets demonstrate that ManifoldRank consistently surpasses state-of-the-art baselines.
\end{enumerate}

%To analyze the root cause for such an inconsistency, we find out that this problem can be explained as an optimization process under the economic theory. Inspired by previous work~\cite{TaxRank, ElasticRank}, which regards the fairness as a taxation form in economics, we firstly mathematically show that fair re-ranking tasks can be equivelently re-written as a seeking Walrasian Equilibrium process in attention market of economics. Specifically,  user satisfaction can be viewed as the demand function that consuming the their attention and supply side providers want to gain user's attention. Fairness degree is servered as a taxation cost on the different user attention to support the tail providers gain more attention from users. The economic formulation is then bridged with manifold optimization, demonstrating that seeking this equilibrium is isomorphic to performing gradient descent on a specific ranking manifold.  And varying re-ranking settings induce distinct manifold geometries, which results in the in-consistency for different set
\section{Related Work}

\noindent%
\textbf{Fair re-ranking.}
Fairness is a central concern in information retrieval (IR), prompting extensive research into embedding fairness constraints within ranking systems~\cite{wang2023survey, lifairness, fang2024fairness, wu2023fairness, GOMEZ2025100311}. Existing literature generally categorizes fair ranking algorithms into three pipeline stages. First, \textit{pre-processing} methods, such as causal models~\cite{yang2020causal} and probabilistic clustering~\cite{lahoti2019ifair}, attempt to remove bias from input data prior to training~\cite{rus-2024-annorank}. Second,  \textit{in-processing} methods integrate constraints directly into the training process via techniques like re-weighting~\cite{APR, jiang2024item} and regularization~\cite{yao2017beyond}. Third,  \textit{post-processing} methods (\ie re-ranking) adjust the final candidate list after the model has been trained~\cite{xu2023p, fairrec, wu2021tfrom, liu2025repeat}. 
In this paper, we focus on fairness in the post-processing stage, as re-ranking offers a flexible and effective mechanism for ensuring fairness without re-training the underlying base ranking models.

\smallskip\noindent%
\textbf{Online fair re-ranking.} Re-ranking strategies can be split into offline and online paradigms. While offline methods~\cite{liu2025repeat, fairrec, fairrecplus, nips21welf, TaxRank} use mixed-integer linear programming for global optimization, they are computationally expensive and impractical for real-time needs.  Consequently, online algorithms are essential for scenarios requiring immediate user satisfaction~\cite{xu2023p}. MACE proposes causal embedding learning for fair streaming IR systems to improve OOD generalization \cite{zhang2024model}. 
These include heuristic methods like \textit{CP-Fair}~\cite{cpfair}, which use greedy solutions to obtain the fair scores. On the other hand, gradient-based methods~\cite{ElasticRank, fairsync, balseiro2021regularized, xu2023p} that use online optimization. For example, \textit{P-MMF}~\cite{xu2023p} employ online learning techniques to optimize fairness objectives via sub-gradients, and \textit{ElasticRank}~\cite{ElasticRank} uses the elasticity of ranking as the online gradient.
However, current online approaches tend to be ranking-score-free; they fail to account for the geometry of the optimization manifold constructed by the ranking scores.

\smallskip\noindent%
\textbf{Multi-stakeholder fairness.} 
Depending on the stakeholders involved~\cite{abdollahpouri2020multistakeholder, abdollahpouri2019multi}, re-ranking fairness can be categorized into user-side or provider-side. User fairness typically focuses on invariance to sensitive user attributes~\cite{rus2023counterfactual, li2021user, wang2021user}. Item (or provider) fairness, however, aims to make item exposure more equitable~\cite{xu2023p, ElasticRank}. We prioritize item fairness in this work because it presents a unique challenge in online settings: unlike user fairness, item fairness requires managing cumulative exposure that accrues dynamically over a period of time, such as one day.

\smallskip\noindent%
\textbf{Economics and fair re-ranking.} 
Fairness has long been a central subject of study in economics, particularly through the lenses of taxation and social welfare functions~\cite{ng1983welfare}. Economic study the fairness mainly through the demand-supply equilibrium analysis method~\cite{atkinson1974pigou}.
In the context of ranking, prior work~\cite{saito2022fair} treats fair ranking as a resource allocation problem, formulating it via the economic concept of Nash Social Welfare~\cite{price4fairness}. Similarly, other studies~\cite{TaxRank, ElasticRank} model fair re-ranking as a taxation process, a key mechanism for redistribution. However, these approaches typically adopt economic concepts only at a conceptual level. In contrast, in this paper we propose a rigorous mathematical comparison establishing the formal equivalence between economic theory and fair re-ranking formulations.

\section{Preliminaries}

In this section we define the fair re-ranking tasks and introduce our framework for modeling economic markets.

\subsection{Fair re-ranking}\label{sec:formulation}

In re-ranking tasks, let $\mathcal{U}$ denote the set of users and $\mathcal{I}$ denote the set of items. Each item $i \in \mathcal{I}$ belongs to a unique group $g \in \mathcal{G}$, and the set of items within a specific group $g$ is represented as $\mathcal{I}_g$. When a user $u \in \mathcal{U}$ accesses the system, a candidate list $L_K(u)$ is generated with ranking scores $s_{u,i}$ for each item $i \in L_K(u)$, $K$ is the size of the candidate list. The ranking score $s_{u,i}$ represents the relevance of item $i$ to user $u$, is typically pre-computed based on ranking models, such as \textit{DMF}~\cite{DMF}, \textit{SASRec}~\cite{SASRec}.

To formulate the re-ranking decision, we introduce a binary decision variable $x_{u,i} \in \{0, 1\}$. Here, $x_{u,i}=1$ indicates that item $i$ is displayed to user $u$ (i.e., selected in the final ranked list), and $x_{u,i}=0$ otherwise. Consequently, the user and item group utilities are determined by the displayed items. The item group utility $\bm{v}_g$ and user utility $\bm{w}_u$ are defined as the accumulated scores of the displayed items:
\begin{equation}\label{eq:utility_define}
    \bm{v}_g = \sum_{u\in\mathcal{U}}\sum_{i\in L(u)} x_{u,i} \cdot s_{u,i} \cdot \mathbb{I}(i\in \mathcal{I}_g), \quad \bm{w}_u = \sum_{i \in L(u)} x_{u,i} \cdot s_{u,i},
\end{equation}
where $\mathbb{I}(\cdot)$ is the indicator function.

The objective of the fair re-ranking problem is to maximize the overall user utility (\ie social welfare) while ensuring that the exposure or utility distributed among different item groups satisfies fairness constraints. This problem can be formulated as a constrained optimization problem:

\begin{equation}
\begin{aligned}
    \max_{x_{u,i}\in \mathcal{X}}   \sum_{u \in \mathcal{U}} \bm{w}_u, \textrm{ such that } \bm{v}_r \approx \bm{v}_p, \quad \forall r, p \in \mathcal{G},
\end{aligned}
\end{equation}
where $\mathcal{X} = \{x_{u,i}\mid x_{u,i} = K,~ x_{u,i} \in \{0, 1\}\}$ represents the set of binary decision variables. The first constraint enforces that the utility difference between any pair of groups $r$ and $p$ remains within an acceptable range. To enforce the constraint $\bm{v}_r \approx \bm{v}_p$, a standard approach involves maximizing a fairness function $r'(v)$, which is formulated such that its optimum yields $\bm{v}_r \approx \bm{v}_p$.

% \emph{Offline settings.}
Typical online fair re-ranking methods aim to learn a fair re-ranking score $f_{u,g}$ with the ranking score to achieve an objective like the following:
\begin{equation}
    L_K(u) = \argmax_{S \in \mathcal{I}^K} \sum_{i \in S} \left(s_{u,i}-\hat{f}_{u, g}\right),
\end{equation}
where $S$ is the subset of $\mathcal{I}$ with size $K$ and $g$ indicates that item $i$ belongs to the group $g$: $i\in\mathcal{I}_g$.

\subsection{Economic markets}
\label{sec:econ_market}

In microeconomics, market forces are driven by the interaction between buyers (demand) and sellers (supply). Figure~\ref{fig:demand_supply}(a) provides an intuitive illustration of this concept.

\emph{1. The demand curve (The buyer)} represents how much consumers want to buy. It follows the \textit{Law of Demand}: when prices are lower, people buy more.

\paragraph{Example:} Imagine buying apples. If an apple costs \$5, you might buy none. If it drops to \$0.50, you might buy 10 to make a pie.
Mathematically, the quantity demanded ($Q_d$) is a decreasing function of price ($P$). The mathematical formulation is:
\[
Q_d = f_d(P), \quad \frac{d Q_d}{dp} \leq 0.
\]

\emph{2. The supply curve (The seller)} represents how much producers are willing to sell. It follows the \textit{Law of Supply}: higher prices make production more profitable, so producers will supply more product.

\paragraph{Example:} Imagine you are an apple farmer. If apples sell for \$0.10, it is not worth the effort, so you supply 0. If they sell for \$5, you work harder to sell as many as possible.
Mathematically, the quantity supplied ($Q_s$) is an increasing function of price:
\[
Q_s = f_s(P), \quad \quad \frac{d Q_s}{dp} \ge 0.
\]

\emph{3. Market Equilibrium} occurs when the buyer's desire matches the seller's willingness. At this optimal price, there is no shortage and no surplus.

\paragraph{Example:} At a price of \$2, the farmer brings exactly 100 apples, and customers want exactly 100 apples. If the price rises above \$2, a surplus of unsold apples forces the farmer to lower the price to clear his stock. Conversely, if the price falls below \$2, a shortage creates competition among buyers, naturally driving the price back up until it returns to the equilibrium level.
Mathematically, we set demand equal to supply: $Q_d (P^*) = Q_s (P^*)$, and solve for the equilibrium point: $E = (Q^*, P^*)$.

\begin{figure}[t]  
    \centering    
    \includegraphics[width=\linewidth]{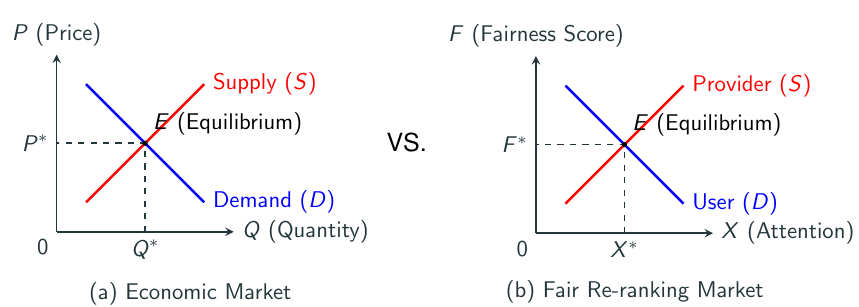}
    \caption{Parallels between (a) an economic market, where the x-axis represents production/consumption quantity and the y-axis represents selling price, and (b) a fair re-ranking market, where the x-axis represents the user attention for an item and the y-axis represents the fair re-ranking score.}
    \label{fig:demand_supply}
\end{figure}

Achieving market equilibrium represents a state of allocative efficiency where total economic surplus is maximized, corresponding to a Pareto optimal allocation of resources as described by the First Fundamental Theorem of Welfare Economics~\cite{ng1983welfare}. In the next section we use these basic notions to formalize the online fair re-ranking problem.
\section{Theory: Attention Market for Fair Re-ranking}
\label{sec:theory}

In this section, we reformulate the fair re-ranking task as a market for selling and consuming users' attention, while explaining why this market-based perspective is crucial for optimizing fair re-ranking. We model the fair re-ranking market by comparing it to the economic market discussed in Section~\ref{sec:econ_market}.

\emph{Overall idea.} The key distinction between a traditional economic market and fair re-ranking lies in redefining exchange variables. The price is interpreted as the fairness score $f_g$, regulating the trade-off between accuracy and fairness, while the quantity corresponds to user attention $x_{u,i}$, treated as a scarce resource allocated to items (see Figure~\ref{fig:demand_supply}(b)). We next elaborate on this formulation.

%\subsection{Equilibrium in attention market}
% \subsection{Overall idea}
% The fundamental distinction between a traditional economic market and the fair re-ranking market lies in the redefinition of exchange variables. The currency price acts as the fair re-ranking score $f_g$, which acts as a regulatory mechanism to balance user accuracy with provider fairness. Furthermore, the Quantity $Q$ shifts from the volume of products sold to the user's attention volume $x_{u,i}$. In this context, attention functions as a scarce resource that users ``spend'' or allocate to content providers. An intuitive visualization can be seen in Figure~\ref{fig:demand_supply} (b). Next, we explain the overall idea in more detail.

\subsection{Demand side}\label{sec:demand}
For each user $u$, we can rewrite their goal to maximize their individual utility for paying attention cost (i.e., global fairness score) $f_{g}$ for group $g\in\mathcal{G}$:
\begin{equation}\label{eq:demand}
      \begin{aligned}
        X^d = \argmax_{x_{u,i}\in\mathcal{X}} &\sum_{i\in\mathcal{I}} \underbrace{\bm{s}_{u,i}\vphantom{f_g}}_{\text{ranking score}}x_{u,i}-\textcolor{blue}{\underbrace{f_g}_{\text{price of paying attention}}}x_{u,i}.
    \end{aligned}
\end{equation}

The user $u$ operates under a constrained attention budget (e.g., the capacity to view only $K$ items) and allocates this scarce resource to maximize their cumulative utility derived from $\bm{s}_{u,i}$. Within this framework, the fairness score $f_g$ functions as the market price, representing the cost incurred when allocating attention $x_{u,i}$ to a specific item.

\begin{theorem}[Law of Demand]\label{theo:law_of_demand}
    Let $X_g^d$ denote the aggregate attention consumed by users. We have the following law of demand. User attention consumption is inversely related to the fairness score $f_g$, implying:
\[
    X_g^d = \sum_{u} \sum_{i \in X^d} x_{u,i} \mathbb{I}(i \in \mathcal{I}_g),\quad \frac{\partial X_g^d}{\partial f_g} \le 0.
\]    
\end{theorem}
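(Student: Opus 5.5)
We argue user by user and then sum over users, because $X_g^d$ is a sum of per-user contributions and each user solves the problem in \eqref{eq:demand} independently. For a fixed user $u$, the objective is separable. Each item $i\in\mathcal{I}_{g'}$ carries a net value $s_{u,i}-f_{g'}$, and the constraint set $\mathcal{X}$ fixes the budget at $K$ selected items. So the maximizer is the set of the $K$ items with the largest net values, with some deterministic tie-breaking rule so that the map $f\mapsto X^d$ is well defined. Because the decision set is discrete, $X_g^d$ is a step function of $f_g$, and we read $\partial X_g^d/\partial f_g\le 0$ as monotonicity: $f_g< f_g'$ implies $X_g^d(f_g')\le X_g^d(f_g)$, with all other prices held fixed. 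Wherever the derivative exists, or in the sense of one-sided difference quotients, this gives the stated sign.

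The key step is a revealed-preference (exchange) argument, which avoids relying on the top-$K$ structure. Let $x$ be optimal at price $f_g$ and $x'$ optimal at price $f_g'>f_g$, with all other prices unchanged. Write $n_g(x)=\sum_{i\in\mathcal{I}_g}x_{u,i}$. The objective at price $f_g$ equals $A(x)-f_g\,n_g(x)$, where $A(x)$ collects all terms that do not depend on $f_g$. Optimality of each point at its own price gives
\[
A(x)-f_g n_g(x)\ge A(x')-f_g n_g(x'),\qquad A(x')-f_g' n_g(x')\ge A(x)-f_g' n_g(x).
\]
Adding the two inequalities yields $(f_g'-f_g)\bigl(n_g(x)-n_g(x')\bigr)\ge 0$, and therefore $n_g(x')\le n_g(x)$. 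This is exactly the discrete analogue of the Law of Demand in Section~\ref{sec:econ_market}. Summing over $u\in\mathcal{U}$ gives $X_g^d(f_g')\le X_g^d(f_g)$.

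The main obstacle is ties, since the argmax may not be unique. The exchange inequality holds for any pair of selections, but monotonicity of a specific selection still needs a consistent tie-breaking rule; a non-unique argmax could make a particular selection non-monotone. I would handle this in one of two ways. The first is to assume generic scores, so that the maximizer is unique for almost every $f_g$. The second is to state the result for the minimal and maximal selections of the argmax correspondence. Between the jump points, $X_g^d$ is locally constant, so the derivative is $0$ there, and every jump is downward. Together these give $\partial X_g^d/\partial f_g\le 0$, understood in the distributional or one-sided sense.
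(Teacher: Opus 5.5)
Your proof is correct, and it takes a different and more rigorous route than the paper's. The paper relaxes each user's problem to an LP over the box-and-budget constraints and appeals to ``monotonicity of optimal solutions in LP''. It asserts $\partial x^*_{u,i}/\partial f_g=-\lambda_u\le 0$, with $\lambda_u$ the multiplier of $\sum_i x_{u,i}\le K$. That identity does not hold as written. Optimal LP vertices are piecewise constant in the cost coefficients, so the derivative is $0$ away from breakpoints and undefined at them. Moreover, $\lambda_u$ measures the sensitivity of the optimal value to $K$, not the sensitivity of the primal solution to the price; by the envelope theorem, the value's derivative in $f_g$ is $-n_g(x^*)$. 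The monotonicity the paper invokes is essentially the statement being proved.

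Your exchange inequality supplies that step directly on the discrete problem. It needs no relaxation, integrality, or differentiability. It uses only that the feasible set does not depend on $f_g$, so it covers both the $\sum_i x_{u,i}=K$ and $\sum_i x_{u,i}\le K$ versions of $\mathcal{X}$ that appear in the paper. It also makes precise that $\partial X_g^d/\partial f_g\le 0$ must be read as monotonicity of a step function. The paper's framing buys only a suggestive link to duality.

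One simplification: your tie-handling is unnecessary. The inequality $(f_g'-f_g)\bigl(n_g(x)-n_g(x')\bigr)\ge 0$ holds for \emph{every} maximizer $x$ at $f_g$ and \emph{every} maximizer $x'$ at $f_g'>f_g$. Hence any selection from the argmax, under any tie-breaking rule, is automatically non-increasing in $f_g$. Ties only make $X_g^d$ ambiguous at a single price, so neither generic scores nor extremal selections are needed.
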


\noindent%
A detailed proof is provided in Appendix~\ref{app:law_of_demand}. Intuitively, a higher price $f_g$ reduces the attractiveness of item $i$, leading to less or equal attention allocation $x_{u,i}$, which in turn decreases or maintains the consuming attention $X_g^d$ for provider $g$.

\subsection{Supply side}\label{sec:supplyside}

For the supply side, each provider $g$ produces attentions that user needs based on the price of getting attention (\ie fairness score):

\begin{equation}\label{eq:supply}
    \begin{aligned}
     X^s &= \argmax_{x_{u,i}\in\mathcal{X}} \sum_{u}\sum_{i\in\mathcal{I}_g} \textcolor{blue}{\underbrace{f_g}_{\text{price of getting attention}}}x_{u,i} - \textcolor[HTML]{006400}{\underbrace{r(v_g)}_{\text{taxation cost}}}.
     %& \textrm{s.t. }\quad \sum_i x_{u,i}\leq K, ~x_{u,i}\in [0,1].
\end{aligned}    
\end{equation}
In this formulation, $r(v_g)$ represents the fairness penalty function for different group utility $v_g$ (defined in Eq.~(\ref{eq:utility_define}), which relates to the attention $x_{u,i}$). Please note that $f_g$ here should be consistent with Equation~(\ref{eq:demand}), as the total revenue from sales must equal the total amount consumed.

Conceptually, this parallels the mechanism of taxation in economics~\cite{ElasticRank}, where a cost is imposed to discourage inequitable resource allocation. Next, we define the $r(v_g)$ as:
 \begin{equation}\label{eq:taxation_cost}
        r(v_g) = \sign(\alpha\beta)v_g^{\textcolor{blue}{\beta}}\left(\sum_g v_g^{\textcolor{blue}{\beta}}\right)^{\textcolor{red}{\alpha-1}},
\end{equation}
where $\alpha \in \mathbb{R}$ is the global taxation and $\beta \in \mathbb{R}$ is the local taxation rate. 
This definition will ensure the previous fairness function is equivalent to the taxation cost of the market, according to the following theorem:

\begin{theorem}[Fairness function as taxation]\label{theo:taxation}
    The total market cost $r'(v) = \sum_g r(v_g)$ satisfies the taxation requirement~\cite{atkinson1974pigou}:
    %\emph{Positive Marginal Tax Rate:} 
    The first derivative represents the marginal tax rate. This condition implies that the tax liability is strictly increasing with income:
    $
        \frac{\partial r'(v)}{\partial v_g} \ge 0.
    $
\end{theorem}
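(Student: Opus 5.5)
The plan is to avoid differentiating each summand $r(v_g)$ separately and instead collapse the total cost into a closed form. Every term in $r'(v)=\sum_g r(v_g)$ shares the common factor $\left(\sum_{g'} v_{g'}^{\beta}\right)^{\alpha-1}$. Writing $S=\sum_{g'} v_{g'}^{\beta}$ and factoring it out of the sum gives
\[
r'(v)=\sign(\alpha\beta)\, S^{\alpha-1}\sum_g v_g^{\beta}=\sign(\alpha\beta)\,S^{\alpha}.
\]
This is the $\alpha$-fair / power-mean family of fairness functions. The monotonicity claim then becomes a one-line chain-rule computation, with no cross terms between groups.

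First I will fix the domain. The group utilities $v_g$ in Eq.~(\ref{eq:utility_define}) are accumulated ranking scores, which I take to be nonnegative. I will assume $v_g>0$ (or add a small smoothing constant) so that the powers $v_g^{\beta}$ and $S^{\alpha-1}$ are well defined and differentiable for arbitrary real $\alpha,\beta$. Under this assumption $S>0$.

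Second, I will dispose of the degenerate case $\alpha\beta=0$. Here $\sign(\alpha\beta)=0$, so $r'\equiv 0$ and the derivative is $0\ge 0$ trivially.

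Third, for $\alpha\beta\neq 0$, I will differentiate the closed form by the chain rule:
\[
\frac{\partial r'(v)}{\partial v_g}=\sign(\alpha\beta)\,\alpha S^{\alpha-1}\cdot \beta v_g^{\beta-1}=|\alpha\beta|\,S^{\alpha-1}v_g^{\beta-1}.
\]
This uses $\sign(\alpha\beta)\,\alpha\beta=|\alpha\beta|$. Both remaining factors are positive powers of positive numbers, so the derivative is nonnegative, and strictly positive whenever $\alpha\beta\neq0$. This is exactly the positive-marginal-tax-rate condition, and it explains why the $\sign(\alpha\beta)$ factor was built into Eq.~(\ref{eq:taxation_cost}).

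The only step I expect to need care is the domain issue rather than the computation. When $\beta<1$ or $\alpha<1$, the factors $v_g^{\beta-1}$ and $S^{\alpha-1}$ blow up at $v_g=0$. I would therefore state the result on the open orthant $v_g>0$ and extend to the boundary by continuity, with derivatives possibly $+\infty$, which is still consistent with $\ge 0$.

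I would also remark that the individual summand $r(v_g)$ need not be monotone in other groups' utilities when $\alpha<1$. This is precisely why the statement is phrased for the total cost $r'$, and why the factoring step comes first.
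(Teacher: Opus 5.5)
Your proposal is correct and follows the paper's proof: the paper likewise collapses the sum to $\sign(\alpha\beta)\bigl(\sum_g v_g^{\beta}\bigr)^{\alpha}$ and applies the chain rule to get $\sign(\alpha\beta)\,\alpha\beta\, G^{\alpha-1}v_g^{\beta-1}\ge 0$. Your version is slightly more careful than the paper's. You state explicitly that $\sign(\alpha\beta)\,\alpha\beta=|\alpha\beta|$ and you restrict to $v_g>0$ so the powers are well defined; the paper leaves both points implicit.
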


 % \begin{itemize}[leftmargin=*, align=left]
    %     \item \emph{Positive Marginal Tax Rate:} The first derivative represents the Marginal Tax Rate (MTR). This condition implies that the tax liability is strictly increasing with income:
    %     \[
    %         \frac{\partial r'(v)}{\partial v_g} \ge 0,
    %     \]
    %     \item \emph{Increasing Marginal Tax Rate}: The second derivative represents the rate of change of the marginal tax rate. This condition implies that the tax function is Convex:
    %      \[
    %         \frac{\partial^2 r'(v)}{\partial v_g^2} \ge 0.
    %      \]
    % \end{itemize}

    % $\mathbf{x} = t \mathbf{v}, t \geq 0, f(t \mathbf{v}) = c t^{a b}(c > 0)$

\noindent%
Once we have established this supply side, we can effectively generalize the market cost $r'(v)$ to encompass any fairness metric listed in Table~\ref{tab:general_fairness}.
And we can easily obtain the Law of Supply:

\begin{table}[t]
    \centering
    \caption{Generalization of fairness metrics via $\alpha$ and $\beta$}
    \label{tab:general_fairness}
    \begin{tabular}{l c c}
        \toprule
        \textbf{Fairness metric $r'(v)$} & $\alpha$ & $\beta$ \\
        \midrule
        \emph{Max-min fairness}~\citep{xu2023p} $\min_g(v_g)$ & $\infty$ & $-\infty$ \\
        
        \emph{$p$-norm}~\citep{bektacs2020using} $\left(\sum_{i=1}^g v_i^p\right)^{1/p}$ & $1/p$ & $p$ \\
        
        \emph{Elastic fairness}~\citep{ElasticRank} $\text{sign}(1-t)\left(\sum_{i=1}^g\bar{\bm{v}}_i^{1-t}\right)^{(1/t)}$ & $1/t$ & $1-t$ \\
        
        \emph{$\alpha$-fairness}~\citep{TaxRank} $\sum_i v_i^{1-\alpha}$ & $1$ & $1-\alpha$ \\
        
        \emph{Proportional fairness}~\citep{price4fairness} $\sum_i \log(v_i)$ & $1$ & $\beta \to 0$ \\ 
        \bottomrule
    \end{tabular}
\end{table}

\begin{theorem}[Law of Supply]\label{theo:law_of_supply}
    Let $X_g^s$ be the aggregate attention allocated to provider $g$. We posit a law of supply such that the attention received increases with the fairness score $f_g$, yielding:
\[
    X_g^s = \sum_{i \in X^s} x_{u,i} \mathbb{I}(i \in \mathcal{I}_g),\quad \frac{\partial X_g^s}{\partial f_g} \ge 0.
\]    
\end{theorem}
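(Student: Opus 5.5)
The plan is a revealed-preference (monotone comparative statics) argument on the supply problem~(\ref{eq:supply}), in the same spirit as the proof of Theorem~\ref{theo:law_of_demand} but with the sign of the price term reversed. For provider $g$, write the supply objective as
\[
\Phi(x; f_g) = f_g\, X_g(x) - r(v_g(x)), \qquad X_g(x) = \sum_{u}\sum_{i\in\mathcal{I}_g} x_{u,i},
\]
where $v_g(x)$ is given by Eq.~(\ref{eq:utility_define}). The key observation is that the feasible set $\mathcal{X}$ does not depend on $f_g$. Moreover, $f_g$ enters $\Phi$ only through the term $f_g X_g(x)$. The taxation cost $r(v_g)$ depends on $x$ but not on the price.

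Take two prices $f_g < f_g'$, and let $x \in X^s(f_g)$ and $x' \in X^s(f_g')$ be corresponding maximizers. Optimality of $x$ at price $f_g$ gives $\Phi(x; f_g) \ge \Phi(x'; f_g)$. Optimality of $x'$ at price $f_g'$ gives $\Phi(x'; f_g') \ge \Phi(x; f_g')$.

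Adding these two inequalities, the taxation terms $r(v_g(x))$ and $r(v_g(x'))$ cancel exactly. What remains is
\[
(f_g' - f_g)\bigl(X_g(x') - X_g(x)\bigr) \ge 0.
\]
Since $f_g' - f_g > 0$, this yields $X_g^s(f_g') \ge X_g^s(f_g)$. Hence $X_g^s$ is nondecreasing in $f_g$, which is the discrete-increment form of $\partial X_g^s/\partial f_g \ge 0$. Wherever $X_g^s$ is differentiable, or in the sense of one-sided difference quotients, the stated inequality follows. Theorem~\ref{theo:taxation} is not strictly needed for this step. Its only role would be to guarantee that $\Phi$ is well-defined and bounded on $\mathcal{X}$, so that maximizers exist.

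The main obstacle is that $\mathcal{X}$ is a finite set of binary vectors. As a result, $X_g^s$ is a step function of $f_g$, and the argmax may be set-valued at breakpoints. I would address this in two ways. First, I would interpret the derivative as a difference quotient, or equivalently in the sense of selections: the inequality above holds for \emph{any} selection of maximizers, so every selection is monotone. Second, I would optionally pass to the continuous relaxation $x_{u,i}\in[0,1]$ with $\sum_i x_{u,i}\le K$. That set is still independent of $f_g$, so the same cancellation argument applies verbatim.

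If a genuinely differential statement is wanted at a regular point, I would use the first-order condition. That condition reads $f_g = \partial r/\partial X_g$, which is the marginal tax rate from Theorem~\ref{theo:taxation}. Differentiating it implicitly gives $\partial X_g/\partial f_g = 1/r''$, which is nonnegative when the tax is locally convex. I would present this only as a remark, because the revealed-preference argument is assumption-free.
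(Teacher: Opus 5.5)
Your proof is correct. It follows a different route from the paper's, and a more rigorous one.

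The paper mirrors its demand-side proof. It writes $\partial X_g^s/\partial f_g = \mathbb{I}(i\in\mathcal{I}_g)\,\partial x^*_{u,i}/\partial f_g$, treats the supply problem as a linear program, and invokes ``monotonicity of LP'' to assert $\partial x^*_{u,i}/\partial f_g = \lambda_u \ge 0$, with $\lambda_u$ the dual multiplier of the budget constraint. You instead use the standard producer-theory revealed-preference argument. You add the optimality inequalities at $f_g<f_g'$, cancel the price-independent tax terms, and read off $(f_g'-f_g)(X_g(x')-X_g(x))\ge 0$.

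Your route does not rely on several unjustified steps in the paper's:
\begin{itemize}
\item The supply objective is not linear, because it contains $r(v_g)$.
\item On the binary feasible set, a selection of maximizers is piecewise constant in $f_g$. Its derivative is therefore $0$ wherever it exists and undefined at breakpoints, not equal to a dual multiplier.
\item The multiplier of $\sum_i x_{u,i}\le K$ measures how the optimal \emph{value} responds to $K$, not how the \emph{solution} responds to the price.
\end{itemize}
Your argument also needs no convexity or differentiability of $r$. It covers set-valued maximizers through arbitrary selections, and it states precisely what $\partial X_g^s/\partial f_g\ge 0$ means for a step function. What the paper's version offers is a formal parallel with its Theorem~\ref{theo:law_of_demand} proof and a claimed closed form for the derivative. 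Neither is actually justified.

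Two minor remarks on your write-up:
\begin{itemize}
\item Theorem~\ref{theo:taxation} does not make $\Phi$ finite on $\mathcal{X}$. For example, $\beta<0$ with $v_g(x)=0$ makes $v_g^{\beta}$ blow up. What you need is that $r(v_g(x))$ is finite on the feasible set; existence of maximizers then follows from finiteness of $\mathcal{X}$.
\item Your closing differential remark treats $r$ as a function of $X_g$, but $r$ depends on the score-weighted $v_g$. That remark is therefore only heuristic, which is fine since you mark it as optional.
\end{itemize}
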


\noindent%
A detailed proof is provided in Appendix~\ref{app:law_of_supply}. Intuitively, a higher price $f_g$ increases the likelihood of item $i$ capturing attention due to its attractiveness. Consequently, the attention $X_g^s$ obtained by provider $g$ is non-decreasing.

\subsection{Walrasian Equilibrium of two sides}

Next, we show that optimizing the fair re-ranking process is equivalent to seeking the Walrasian Equilibrium of the previously introduced demand and supply sides, where the fairness objective is viewed as a single total taxation.

First, fair re-ranking algorithms often seek to learn an optimal re-ranking score $f_g$, which corresponds to the price of attention defined earlier.
Next, we use the following theorem to prove that the fair re-ranking process is equivalent to the Walrasian Equilibrium of the demand and supply sides described above.

\begin{theorem}[Walrasian Equilibrium in Re-ranking]\label{theo:walrasian}
    Assuming $r'(v)$ is convex, we have:
   \begin{equation}
   \label{eq:opt_fair_score}
    \resizebox{0.92\linewidth}{!}{$
        \begin{aligned}
             \max_{x_{u,i}\in \mathcal{X}} \sum_u w_u - r'(v) \iff \max_{x_{u,i}\in \mathcal{X}} &\sum_u\sum_i (s_{u,i}-f^*_{g})x_{u,i},
        \end{aligned}
        $}
    \end{equation}
    where $\mathcal{X} = \{x_{u,i} | \sum_i x_{u,i}\leq K, ~x_{u,i}\in \{0,1\}\}$ and $w_u$ is the user utility defined in Section~\ref{sec:formulation}. 
    
    Moreover, assume that the price $f^*_{g}$ is such that 
   \[
   \sum_gX_g^d(f^*_{g}) = \sum_gX_g^s(f^*_{g}) = |\mathcal{U}|K.
   \]
   Then the allocation for demand and supply $\{X_g^d, X_g^s\}$ is Pareto optimal. That is, there does not exist another re-ranking allocation $\widetilde{X}^d, \widetilde{X}^s$ that satisfies:
   $\forall u, w_u (X^d) \leq w_u (\widetilde{X}^d)
   $,
    and 
    $\exists u_k, w_{u_k} (X^d) < w_{u_k} (\widetilde{X}^d).
    $
\end{theorem}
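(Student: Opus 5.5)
We prove the two parts of Theorem~\ref{theo:walrasian} separately: the equivalence by Lagrangian (Fenchel) duality, and Pareto optimality by a standard First Welfare Theorem argument.

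\emph{Part 1: equivalence.} We first relax $\mathcal{X}$ to its convex hull $\{x_{u,i}\in[0,1],\ \sum_i x_{u,i}\le K\}$.
\begin{itemize}[leftmargin=*]
\item Introduce auxiliary variables $v_g$ and the coupling constraint $v_g=\sum_u\sum_{i\in\mathcal{I}_g}s_{u,i}x_{u,i}$, with multipliers $\lambda_g$.
\item Since $r'$ is convex, the objective $\sum_u w_u - r'(v)$ is concave and all constraints are affine. Slater's condition holds (take an interior $x$), so strong duality applies and the saddle point $(x^*,\lambda^*)$ exists.
\item Maximizing the Lagrangian over $v$ yields the conjugate term $r'^{*}(\lambda)$. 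For fixed $\lambda^*$, the problem in $x$ separates across users into linear problems with objective $\sum_i\big(s_{u,i}-\lambda^*_{g}s_{u,i}\big)x_{u,i}$.
\item Define the price $f^*_g$ as this per-unit charge, i.e.\ $f^*_g:=\partial r'(v^*)/\partial v_g$ (scaled by $s_{u,i}$ if one wants a literal match with the form in Eq.~(\ref{eq:opt_fair_score})). Theorem~\ref{theo:taxation} guarantees the marginal tax rate is nonnegative, so $f^*_g\ge 0$ is a legitimate price. This gives the right-hand side of Eq.~(\ref{eq:opt_fair_score}).
\item Conversely, any maximizer of the right-hand side that is consistent with $v^*$ satisfies the KKT conditions of the left-hand side. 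Concavity makes KKT sufficient, which gives the reverse direction.
\end{itemize}
The linear program over the relaxed polytope has a vertex solution that is integral, since selecting the top-$K$ items of $s_{u,i}-f^*_g$ is a totally unimodular problem. This lets us return to binary $x$, up to ties.

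\emph{Part 2: Pareto optimality.} Suppose, for contradiction, that $\widetilde X^d$ weakly improves every $w_u$ and strictly improves $w_{u_k}$.
\begin{itemize}[leftmargin=*]
\item Summing over users gives $\sum_u w_u(\widetilde X^d)>\sum_u w_u(X^d)$.
\item Each $X^d$ is a demand-side maximizer of Eq.~(\ref{eq:demand}) at prices $f^*$, so for every $u$,
$\sum_i (s_{u,i}-f^*_g)\widetilde x_{u,i}\le \sum_i(s_{u,i}-f^*_g)x_{u,i}$.
Summing over $u$ shows that the price expenditure $\sum f^*_g\widetilde x$ must strictly exceed $\sum f^*_g x$.
\item On the supply side, $X^s$ maximizes $\sum f^*_g x - r(v_g)$, which is Eq.~(\ref{eq:supply}).
\item Market clearing, $\sum_g X^d_g=\sum_g X^s_g=|\mathcal{U}|K$, identifies demand with supply and saturates the attention budget. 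Combined with the budget being binding, this forces a feasible $\widetilde X$ to have total revenue at most that at equilibrium. Otherwise the providers' profit at $\widetilde X$ would exceed their optimum, or the budget $\sum_i x_{u,i}\le K$ would be violated.
\item This contradicts the strict inequality above and completes the Walrasian argument.
\end{itemize}

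\emph{Main obstacle.} The hardest step is the revenue comparison in Part 2. The standard argument needs the supply optimum to bound the revenue of any alternative allocation, and the taxation $r$ depends on $\widetilde X$ through $v_g$. I would first try to absorb $r$ by using the Part 1 equivalence: both $X^d$ and $X^s$ are the unique maximizer of the concave social objective $\sum_u w_u - r'(v)$. Any Pareto improvement in $w$ that keeps $v$ (hence $r'$) unchanged would then contradict optimality. If $v$ changes, convexity of $r'$ together with the subgradient inequality $r'(\tilde v)\ge r'(v^*)+\langle f^*,\tilde v-v^*\rangle$ should close the gap. A further subtlety is that the relaxation gap and ties may break uniqueness. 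I would handle this by restricting the statement to the generic case of distinct adjusted scores.
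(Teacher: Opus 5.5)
Your plan follows the same route as the paper: multipliers read as prices for the equivalence, and a First-Welfare-Theorem contradiction for Pareto optimality. However, Part~2 has a genuine gap exactly at the revenue comparison you flag, and the proposed repair cannot close it. Take your $f^*=\nabla r'(v^*)$ and expenditure $\langle f^*,v\rangle$. The subgradient inequality $r'(\tilde v)\ge r'(v^*)+\langle f^*,\tilde v-v^*\rangle$ just says that $v^*$ maximizes the supply objective $\langle f^*,v\rangle-r'(v)$. It bounds $\langle f^*,\tilde v-v^*\rangle$ by $r'(\tilde v)-r'(v^*)$, not by $0$. The bullet claiming a binding budget caps revenue is also unfounded, because revenue depends on which groups receive the slots. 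Chaining with your demand inequality $\sum_u w_u(\widetilde X)-\sum_u w_u(X)\le\langle f^*,\tilde v-v^*\rangle$ gives only $\sum_u w_u(\widetilde X)-r'(\tilde v)\le\sum_u w_u(X)-r'(v^*)$. That is compatible with every $w_u$ rising while $r'$ rises by more.

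No argument can do better. The users' feasible sets are decoupled, since there is no item capacity. Giving every user her $K$ highest-scoring items therefore maximizes all $w_u$ simultaneously and still clears the market with $|\mathcal{U}|K$ slots. This allocation Pareto-dominates, in the $w_u$ alone, every equilibrium in which the prices cost some user relevance. The paper's Step~1 gets past this point only by asserting $\sum_g f^*_g\widetilde X^s_g\le\sum_g f^*_g X^s_g$. In effect it treats providers as pure revenue maximizers and drops $r$ from Eq.~(\ref{eq:supply}), which is exactly the step you could not justify. What your inequalities do prove is that the equilibrium maximizes $\sum_u w_u-r'(v)$, i.e., Pareto optimality once providers' taxed profits are counted. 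You should state the result that way, or adopt the revenue-maximizing supply side as an explicit assumption.

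Part~1 likewise proves a modified statement. The Lagrangian charge is $\lambda^*_g s_{u,i}$, a multiplicative discount. Rescaling by $s_{u,i}$ gives an item-dependent price, and in general no additive group price $f^*_g$ reproduces the left-hand optimum. Take $K=1$, user $A$ with scores $10$ and $9.5$ on a group-1 and a group-2 item, user $B$ with scores $1$ and $0.5$, and $r'(v)=0.1(v_1^2+v_2^2)$. The binary left-hand optimum (value $1.375$) gives $A$ the group-2 item and $B$ the group-1 item. That needs $f_1-f_2\ge 0.5$ for $A$ and $f_1-f_2\le 0.5$ for $B$. At equality both users are indifferent between their two items, so the argmax sets never coincide.

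The same example defeats the genericity fix. The relaxation attains $v^*=(5,5)$ with value $5$ and $\nabla r'(v^*)=(1,1)$, so every adjusted score is $0$. In general, ties at $\lambda^*$ are forced whenever the relaxed optimum is fractional. Total unimodularity then returns an arbitrary vertex of the tied face, and here the binary problem has a duality gap ($5$ versus $1.375$). The paper's Step~2 merely dualizes the clearing constraint and asserts the conclusion, so it offers no remedy. What your argument actually supports is a KKT characterization of the relaxed problem, with prices $\nabla r'(v^*)$ acting multiplicatively and the consistency condition $v(x)=v^*$ doing real work.
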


\noindent%
A detailed proof can be found in Appendix~\ref{app:walrasian}. Intuitively, if the fair re-ranking score $f^*_g$ (serving as the price) deviates from the optimum, market forces will correct it. The price will either rise in response to excess demand or fall in response to excess supply, driving the system toward improved accuracy and fairness.

Next, we use the following theorem to prove that the fairness learning objective is essentially equivalent to the single taxation mechanism. This result constitutes an equivalent formulation of the Second Fundamental Theorem of Welfare Economics~\cite{ng1983welfare}:

\begin{theorem}[Economic explaination for fairness]\label{theo:taxation_with_fairness}
   Assuming $r'(v)$ is convex, the fairness objective is equivalent to implementing a single taxation mechanism $T_g$. In this context, $T_g > 0$ represents a tax levied on ``rich'' providers, while $T_g < 0$ means a subsidy for ``poor'' providers. Eventually, it will reach the Pareto optimal point.
\end{theorem}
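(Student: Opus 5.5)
The plan is to obtain Theorem~\ref{theo:taxation_with_fairness} as a Lagrangian (convex-duality) reformulation of Theorem~\ref{theo:walrasian}. First I relax $\mathcal{X}$ to its convex hull $x_{u,i}\in[0,1]$, $\sum_i x_{u,i}\le K$. On this set the objective $\sum_u w_u - r'(v)$ is concave, because $w_u$ and $v_g$ are linear in $x$ and $r'$ is convex by assumption. I introduce auxiliary variables $v_g$ together with the coupling constraints $v_g=\sum_u\sum_{i\in\mathcal{I}_g}s_{u,i}x_{u,i}$, and attach multipliers $\lambda_g$ to them. The Lagrangian then splits into two parts:
\[
\sum_u\sum_i \bigl(s_{u,i}-\lambda_g s_{u,i}\bigr)x_{u,i} \;+\; \sum_g \bigl(\lambda_g v_g - r(v_g)\bigr).
\]
The first part is exactly the demand problem~(\ref{eq:demand}) with price $f_g$. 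The second part is the supply problem~(\ref{eq:supply}), i.e.\ the provider's taxed revenue. Slater's condition holds on the relaxed polytope, so strong duality applies. Hence there exist optimal multipliers $\lambda^*_g$ with $\lambda^*_g\in\partial r'(v)/\partial v_g$ at the optimum $v^*$. I define the single tax $T_g$ as the per-unit charge $f^*_g$ induced by $\lambda^*_g$; up to the scaling by $s_{u,i}$, this is $T_g=\partial r'(v^*)/\partial v_g$ evaluated at $v^*$.

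Second, I establish the sign interpretation. I write $T_g=\partial_g r'(v^*) - \bar T$, where $\bar T$ is the average marginal tax. The common part $\bar T$ cannot change the top-$K$ argmax in~(\ref{eq:demand}), because every user selects exactly $K$ items and a uniform shift therefore cancels. Next I use the form~(\ref{eq:taxation_cost}): $\partial_g r'(v)$ is a monotone function of $v_g$ through the factor $v_g^{\beta-1}$, while the factor $(\sum_g v_g^\beta)^{\alpha-1}$ is common to all groups. Theorem~\ref{theo:taxation} together with convexity makes this marginal nondecreasing in $v_g$. So groups with $v_g$ above the level where the marginal equals $\bar T$ get $T_g>0$ (a tax on "rich" providers), and groups below it get $T_g<0$ (a subsidy for "poor" providers). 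Budget balance of the taxes, $\sum_g T_g X_g=0$ up to normalization, follows from the market-clearing condition $\sum_g X^d_g=\sum_g X^s_g=|\mathcal{U}|K$ in Theorem~\ref{theo:walrasian}.

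Third, Pareto optimality follows directly from the second part of Theorem~\ref{theo:walrasian}. The price $f^*_g=T_g$ clears the market, so the resulting allocation is Pareto optimal. This is the Second Welfare Theorem reading: the fairness-optimal point is reached as a competitive equilibrium once the redistribution $T_g$ is in place.

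The main obstacle is the integrality of $\mathcal{X}$. Strong duality holds only for the relaxation, so I must show that the relaxed optimum is attained at a binary vertex. My first attempt is this: for fixed prices the demand problem is a per-user top-$K$ selection, which has a totally unimodular constraint matrix, so an integral optimum exists. Any residual fractional ties can be broken without changing objective value. I would also note that the sign claim needs $\beta$-dependent care, namely the $\operatorname{sign}(\alpha\beta)$ factor, to ensure that the marginal is increasing in $v_g$ rather than decreasing.
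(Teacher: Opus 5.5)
Your route differs from the paper's. The paper assumes $X^*$ is Pareto optimal and separates the aggregate production set $S$ from the ``better-than'' set $D$ by a hyperplane to get a supporting price $f$. It then simply asserts a budget-balanced transfer $T_g$ with $\sum_g T_g=0$. Multipliers and separating hyperplanes express the same convex-duality fact, and your version gives more: an explicit tax (the centered marginal $\partial_g r'(v^*)-\bar T$) and an actual argument for the tax/subsidy signs.

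\textbf{The integrality step fails.} This is the step you flag as the main obstacle. The relaxed problem maximizes the nonlinear concave function $\sum_u w_u-r'(v)$, so its optimum need not be a vertex at all. Total unimodularity only says that the \emph{linear} subproblem at fixed $\lambda^*$ has \emph{some} integral maximizer. At $\lambda^*$ that subproblem typically has ties, precisely because the relaxed optimum mixes tied bundles fractionally to hit $v=v^*$. Breaking ties preserves the Lagrangian value but moves $v$, so it changes $r'(v)$ and violates the coupling constraint. For example, take one user, $K=1$, two items of score $1$ in two groups, and $r'(v)=v_1^2+v_2^2$ (so $\alpha=1$, $\beta=2$). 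The relaxed optimum is $v=(1/2,1/2)$ with value $1/2$. Here $\lambda^*=(1,1)$ makes the subproblem identically zero, and every binary choice has value $0$. So over binary $\mathcal{X}$ there is in general a duality gap and no exact supporting tax. You can claim the statement only for the relaxation, or approximately. For the approximate version, a vertex argument on the $|\mathcal{G}|$ coupling constraints lets you choose the relaxed optimum with at most $|\mathcal{G}|$ users holding fractional lists, which bounds the rounding loss. The paper's proof hides the same limitation, since separating $S$ from $D$ needs both sets to be convex.

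\textbf{Two smaller repairs are also needed.} First, because $v_g$ is score-weighted, the multiplier enters as the item-level charge $\lambda_g s_{u,i}$ (an ad valorem rate), not as the additive price $f_g$ of (\ref{eq:demand}). The users' argmax is invariant under a common positive rescaling of the factors $1-\lambda_g$, but not under a common shift of $\lambda_g$. So your normalization by $\bar T$ is not valid as written. Either normalize multiplicatively, or define fairness on unweighted exposure so that the multiplier really is a per-slot price. Also, $\sum_g T_gX_g=0$ then holds because you choose $\bar T$ as the $X$-weighted average, not because of market clearing. Second, the sign claim needs to compare marginals \emph{across} groups at the same point $v^*$, and neither own-coordinate convexity nor Theorem~\ref{theo:taxation} does that. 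The right tool is that a symmetric convex function is Schur-convex, so by the Schur--Ostrowski criterion $(v_g-v_h)(\partial_g r'-\partial_h r')\ge 0$. For $r'=\sign(\alpha\beta)(\sum_g v_g^\beta)^\alpha$ this reads $|\alpha\beta|G^{\alpha-1}(v_g-v_h)(v_g^{\beta-1}-v_h^{\beta-1})\ge 0$, which for $\alpha\beta\neq 0$ forces $\beta\ge 1$. So your $\beta$-caveat is settled by the convexity hypothesis itself.
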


\noindent%
A detailed proof is provided in Appendix~\ref{app:taxation_with_fairness}. This theorem demonstrates that the concept of economic fairness aligns closely with the objectives of fair re-ranking. This theorem implies that a fundamental requirement for the fairness function is to redistribute utility, effectively taxing high-resource entities to compensate those in the lower-resource brackets.
This theoretical connection motivates our analysis of existing fair re-ranking algorithms.

\subsection{Regarding attention market as manifold}
In this section, we will use the manifold concept to describe the process for optimizing re-ranking.
In advanced economic theory, rather than viewing a market merely as a system of linear constraints or a static vector in $\mathbb{R}^{|\mathcal{G}|}$, we can model the entire set of feasible market states as a smooth manifold~\cite{balasko2009equilibrium}. 

Let the state of the economy be defined by a vector $x$ (representing allocations, prices, or attention distributions). The structural constraints of the economy define the geometry of $\mathcal{M}$:
\[
\mathcal{M} = \{ f \in \mathbb{R}^{|\mathcal{G}|} \mid \sum_g X_g^d(f) = \sum_g X_g^s(f) \}.
\]
Under this formulation, the ``Market'' is not just a point, but a geometric landscape (re-ranking scores and fairness objective) for finding the optimal price. Conducting gradient descent on such a manifold is subject to constraints from two sources: the supply side, governed by fairness parameters $\alpha$ and $\beta$, and the demand side, determined by the ranking score attributes $s_{u,i}$. Both factors influence the trajectory of the gradient step. 

However, existing online fair re-ranking methods often overlook these constraints within the context of Demand-Supply Equilibrium, leading to the unstable performance illustrated in Figure~\ref{fig:intro}. 

%In the following section, we analyze this interaction and propose a robust re-ranking model designed to address these limitations.

% And it can be aggregated to the general fairness form~\cite{ElasticRank}:
% \begin{equation}
%     \sum_g r(v_g) = r'(v) = \text{sign}(\alpha\beta-1)(\sum_g v_g^{\beta})^{\alpha}.
% \end{equation}    

%implying that the greater a provider’s gain $x$ from user attention, the higher the cost (i.e., tax) imposed on them.

\section{Method: ManifoldRank}

Building on the previously discussed theory, this section focuses on developing a method named ManifoldRank. 
The overall idea can be seen in the Algorithm~\ref{alg:ManifoldRank}.

\begin{algorithm}[t]
    \caption{Algorithm of ManifoldRank}
	\label{alg:ManifoldRank}
	\begin{algorithmic}[1]
	\REQUIRE User set $\mathcal{U}$, item set $\mathcal{I}$, group set $\mathcal{G}$, ranking size $K$, taxation parameters $\alpha, \beta$, demand parameters $a_e, a_s>0$, user-item ranking score $s_{u,i}, \forall u\in\mathcal{U},\forall i\in\mathcal{I}$ 
        \STATE Set $\bm{v}_g = 1, \forall g\in\mathcal{G}$
        \FOR{$u\in\mathcal{U}$}
            %\STATE Get fairness objective $r'(v) = \sum_g r(v_g)$ in Eq.~(\ref{eq:taxation_cost})
            \STATE  $// ~~ \texttt{Gradient compuation}$
            \STATE Valid direction $c_g = \sign\left((\alpha-1)\beta v_g^{\beta}+(\beta-1)(\sum_i v_g^{\beta})\right)$
            \STATE Get normalized factor $G = \sum_j x_j^{\beta}$
            \STATE Supply gradient: $\eta_g = c r(v_g)\beta\left(\frac{1}{v_g}+(\alpha-1)\frac{v_g^{\beta-1}}{G}\right)$
            %\STATE  $// ~~ \texttt{Demand term for gradient}$
            %\STATE Let $\bm{s} = [s_{u,i}]_{i=1}^{|\mathcal{I}|}$
            \STATE Demand gradient $\zeta = a_e\text{Entropy}(s) - a_s\text{Skew}(s)$
            \STATE  $// ~~ \texttt{Online re-ranking}$
            \STATE $L_K(u) = \argmax_{S\in\mathcal{I}^k} \sum_{i\in S} \left(s_{u,i}-\eta_g * \zeta\right), i\in \mathcal{I}_g$
            \STATE $\bm{v}_g = \bm{v}_g + \sum_{i\in L_K(u)} s_{u,i}\mathbb{I}(i\in \mathcal{I}_g), \forall g\in\mathcal{G}$
        \ENDFOR
	\end{algorithmic}
    %\vspace{-0.2cm}
\end{algorithm}

\subsection{Online re-ranking}

During online re-ranking, whenever a user $u$ enters the system, ManifoldRank will compute the fairness score:
\begin{equation}\label{eq:online_reranking}
    L_K(u) = \argmax_{S\in\mathcal{I}^k} \sum_{i\in S} \left(s_{u,i} - \eta_g(v_g) * \zeta(v_g) \right), 
\end{equation}
where $i\in \mathcal{I}_g$ and $v_g$ is updated after returning a ranked list:
\[
    \bm{v}_g = \bm{v}_g + \sum_{i\in L_K(u)} s_{u,i}\mathbb{I}(i\in \mathcal{I}_g), \forall g\in\mathcal{G}.
\]
The fairness score is composed of $\eta_g(v_g)$ and $\zeta(v_g)$, which can be interpreted as the gradients on the supply side and demand side, respectively. They will be detailed in the next two subsections. 

\emph{A comment on efficiency.} ManifoldRank derives the gradient in closed-form, necessitating only fundamental arithmetic operations. This ensures high computational efficiency and makes the approach well-suited for real-time online re-ranking.

\subsection{Gradient on the supply side}

From the supply side formulation in Section~\ref{sec:supplyside}, we can observe that the price (i.e., fairness score) is related to the cost function $r(v_g)$. Overall, the gradient can be written as:
\begin{equation}\label{eq:sup_gradient}
    \eta_g = c_g * f_g,
\end{equation}
where $c_g$ is the gradient direction and $f_g$ is the gradient value.

\textbf{Gradient value.} First, according to the supply side formulation in Eq.~(\ref{eq:supply}), we can 
observe that the price of getting attention $f_g$ is positively related to the first order of the cost function $r(v_g)$:
\begin{equation}
    \begin{aligned}
           f_g &= \frac{\partial r(v_g)}{\partial x_{u,i}} = \frac{\partial r(v_g)}{\partial v_g}\frac{\partial v_g}{\partial x_{u,i}} \propto \frac{\partial r(v_g)}{\partial v_g} = r(v_g)\gamma_g,   
    \end{aligned}
\end{equation}
where $\frac{\partial v_g}{\partial x_{u,i}}>0$ means allocating more user attention to the provider consistently enhances their utility, and $\gamma_g$ can be regarded as the manifold degree of curvature:
\[
    \gamma_g = \beta\left[\frac{1}{v_g}+(\alpha-1)\frac{v_g^{\beta-1}}{G}\right], G = \sum_g v_g^{\beta},
\]
where $G$ can be viewed as the taxation normalization for all groups.

\textbf{Gradient direction.} 
Theorem~\ref{theo:walrasian} and \ref{theo:taxation_with_fairness} both require that the total cost function $r'(v)$ is convex. Therefore, we need to adjust the direction $c_g = \{-1,1\}$ to make the term $\frac{\partial^2 r'(v)}{\partial v_g^2}$ be positive:
\begin{equation}
    c_g = \sign(\frac{\partial^2 r'(v)}{\partial v_g^2}) = \sign\left((\alpha-1)\beta x_i^{\beta}+(\beta-1)(\sum_i x_i^{\beta})\right).
\end{equation}

\subsection{Gradient on the demand side}\label{sec:gradient_demand}

The demand side presents a greater level of complexity. As outlined in Section~\ref{sec:demand}, the impact on price is associated with the ranking score $s_{u,i}$, and the constraints form a simplex structure. Consequently, we opt to use the statistical information of $s_{u,i}$ for each user $u$: $\bm{s} = [s_{u,i}]_{i=1}^{|\mathcal{I}|}$ to empirically assess the effect:
\begin{equation}
    \zeta = a_e\text{Entropy}(s) - a_s\text{Skew}(s),
\end{equation}
where \text{Entropy}(s), \text{Skew}(s) are the the entropy and skewness of $\bm{s}$ and $a_e>0$ and $a_s>0$ are the hyper parameters. 

Intuitively, higher entropy indicates greater diversity in user preferences, meaning that implementing fairness taxation on such users will have a minimal impact on performance. Conversely, higher skewness suggests that users prefer a very limited range of items, and imposing a higher taxation in this case could significantly harm performance. 
The influences of other statistical information is discussed in Section~\ref{exp:demand_info}.

\emph{Discussion.} 
We note that the connection to manifold-based optimization is an empirical approximation rather than a strict theoretical equivalence. While the supply-side formulation is derived analytically, the demand-side term $\zeta$ is constructed using statistical properties as a practical proxy for the complex simplex-constrained structure of user preferences. This design trades theoretical completeness for tractability and efficiency in online settings. Although heuristic, our experiments across diverse datasets suggest that this approximation captures stable and transferable patterns in practice, while we leave more principled formulations as future work.

\section{Experiments}

We evaluate ManifoldRank using seven public ranking datasets.

\subsection{Experimental settings}\label{sec:exp_settings}

\textbf{Dataset.} Our experiments are based on seven large-scale, public datasets: \textbf{Steam}~\cite{SASRec}: a ranking dataset for games on the Steam platform; \textbf{Five Amazon subsets}~\citep{he2016ups}: a widely used recommendation dataset, and \textbf{AliEC}\footnote{\url{https://tianchi.aliyun.com/dataset/56}}: a large-scale e-commerce ad recommendation dataset. The detailed information after pre-processing is in Table~\ref{tab:dataset}.

% $\bullet$
% \textbf{Steam}~\cite{SASRec}: a ranking dataset for games on the Steam platform.
% We use the data for games played for more than 10 hours in our experiments. The publishers of games are considered item groups. It has 169,160 samples, which contain 4,444 users, 1,237 items, and 43 item groups.\footnote{\url{http://cseweb.ucsd.edu/~wckang/Steam_games.json.gz}.}

% $\bullet$
% \textbf{Five Amazon subsets}~\citep{he2016ups}: a widely used recommendation dataset that consists of a variety of product categories and includes interactions such as product ratings, and purchase histories. They all use brands as item groups. After pre-processing, they include: 
% \emph{Digital music domains}: 46,640 samples, with 873 users, 11,952 items, and 33 item groups; \emph{Fashion domains}: 7,985 samples, with 629 users, 3,915 items, and 11 item groups; \emph{Software domains}: 13,115 samples, with 583 users, 2,707 items, and 34 item groups; \emph{Industrial and scientific domains}: 16,585 samples, with 1,078 users, 1,756 items, and 29 item groups; \emph{Arts, crafts, and sewing domains}: 113,365 samples, with 3,690 users, 3,109 items, and 100 item groups.\footnote{\url{http://jmcauley.ucsd.edu/data/amazon/}.}

% $\bullet$
% \textbf{AliEC}: a large-scale e-commerce ad recommendation dataset. The categories of items are considered as item groups. After pre-processing it has 505,430 samples, with 21,242 users, 17,310 items, and 117 item groups.\footnote{\url{https://tianchi.aliyun.com/dataset/56}}

\begin{table}[t]
\centering
\small
\caption{Statistics of datasets.}
\label{tab:dataset}
\setlength{\tabcolsep}{2pt}
\begin{tabular}{lccccccc}
\toprule
 & Steam & Music & Fashion & Software & Industrial & Arts & AliEC \\
\midrule
\#Users   & 4,444 & 873 & 629 & 583 & 1,078 & 3,690 & 21,242 \\
\#Items   & 1,237 & 11,952 & 3,915 & 2,707 & 1,756 & 3,109 & 17,310 \\
\#Groups  & 43 & 33 & 11 & 34 & 29 & 100 & 117 \\
\#Samples & 169,160 & 46,640 & 7,985 & 13,115 & 16,585 & 113,365 & 505,430 \\
\bottomrule
\end{tabular}
\end{table}

%During the pre-processing step, users that have interactions with fewer than $L$ items are excluded from the entire dataset to mitigate the issue of extreme sparsity. We set $L=8$ for Amazon-Arts-Crafts-and-Sewing, $L=5$ for AliEC, Steam, and Amazon-Industrial-and-Scientific, and $L=3$ for the other three datasets. Similarly, for the item side, we set $L=8$, $L=5$, and $L=1$, respectively. 
Following~\citet{ElasticRank}, we consider groups with fewer than 10 items as a single group, which we name the ``infrequent group''.
Following~\cite{xu2023p, cpfair}, we sort all interactions by time and use the first 80\% of the interactions as data to train the base ranking model. The remaining 20\% of interactions are used as the test data for re-ranking tasks.

\begin{table*}[ht]
\setlength{\tabcolsep}{3.4pt}
\Small
\caption{Performance comparison of ManifoldRank and the baselines on seven datasets, where ``Music'', ``Fashion'', ``Software'', ``Industrial'', and ``Arts'' denote five Amazon subsets. We tuned various re-ranking models to achieve accuracy performances (NDCG) exceeding 99\%, and evaluated fairness across two cut-offs $K$ to assess three fairness metrics (EF@K, GINI@K, MMF@K). Bold number indicates that the improvements over the baselines are statistically significant (t-tests and $p$-value $< 0.05$).}
\label{tab:EXP:main}
\centering
\begin{tabular}{@{}ll cc cc cc cc cc cc cc @{}}
\toprule
\multirow{2}{*}{Metric} & \multicolumn{1}{l}{Datasets} & \multicolumn{2}{c}{Steam} & \multicolumn{2}{c}{Music} & \multicolumn{2}{c}{Fashion} & \multicolumn{2}{c}{Software} & \multicolumn{2}{c}{Industrial} & \multicolumn{2}{c}{Arts} & \multicolumn{2}{c}{AliEC} \\ \cmidrule(l){2-16}
 & \multicolumn{1}{l}{Ranking size} & \multicolumn{1}{c}{K=10} & \multicolumn{1}{c}{K=20} & \multicolumn{1}{c}{K=10} & \multicolumn{1}{c}{K=20} & \multicolumn{1}{c}{K=10} & \multicolumn{1}{c}{K=20} & \multicolumn{1}{c}{K=10} & \multicolumn{1}{c}{K=20} & \multicolumn{1}{c}{K=10} & \multicolumn{1}{c}{K=20} & \multicolumn{1}{c}{K=10} & \multicolumn{1}{c}{K=20} & \multicolumn{1}{c}{K=10} & \multicolumn{1}{c}{K=20} \\
 \midrule
\multirow{6}{*}{EF} & CPFair & -12.797 & \underline{-2.0865} & -15.884 & -4.1494 & -4.6663 & -6.2038 & -13.487 & -2.0288 & -2.1845 & -2.0778 & -0.8414 & -1.1086 & -21.618 & -14.428 \\
 & Min-regularizer & -20.326 & -4.9795 & -15.907 & -6.0151 & -6.8100 & -5.8191 & -1.9039 & -3.0153 & -2.7667 & -2.5984 & -1.6493 & -1.1742 & -21.623 & -16.551 \\
 & P-MMF & -26.329 & -2.1071 & \underline{-2.8046} & -3.0857 & -4.5436 & -5.7765 & -1.9605 & -1.8360 & -1.6128 & -1.7352 & -1.1208 & -1.6035 & -8.7536 & -9.2907 \\
 & FairSync & \underline{-3.5548} & -5.9235 & -2.8208 & \underline{-2.7674} & -4.6886 & \underline{-4.4956} & \underline{-1.0825} & \underline{-1.2412} & \underline{-1.2458} & \underline{-1.1884} & \underline{-0.6494} & -0.8975 & -3.8817 & -5.5189 \\
 & ElasticRank & -4.9473 & -2.2201 & -4.9056 & -7.6306 & \underline{-4.5156} & -5.9379 & -1.4596 & -1.3225 & -1.2652 & -1.4249 & -0.6862 & \underline{-0.7460} & \underline{-3.5795} & \underline{-3.5835} \\
 & \textbf{ManifoldRank (ours)} & \textbf{-3.3971} & \textbf{-1.8908} & \textbf{-2.3834} & \textbf{-2.7253} & \textbf{-4.2612} & \textbf{-4.3041} & \textbf{-1.0263} & \textbf{-0.9437} & \textbf{-1.0860} & \textbf{-1.0774} & \textbf{-0.6276} & \textbf{-0.7013} & \textbf{-3.2473} & \textbf{-3.2589}\\
 \midrule
\multirow{6}{*}{GINI} & CPFair & \underline{0.6612} & 0.6444 & 0.7547 & 0.8590 & 0.7731 & 0.8191 & 0.4999 & 0.4928 & 0.5165 & \underline{0.5393} &\underline{0.2412} & 0.3978 & 0.7735 & 0.7689 \\
 & Min-regularizer & 0.6763 & 0.6418 & 0.7655 & 0.8614 & 0.8020 & 0.8192 & 0.5638 & 0.4726 & 0.5831 & 0.6008 & 0.3028 & 0.3918 & 0.7903 & \underline{0.7673} \\
 & P-MMF & 0.6675 & \underline{0.6222} & \underline{0.7537} & 0.8471 & \underline{0.7650} & 0.8172 & \underline{0.4740} & \underline{0.4342} & 0.5571 & 0.5348 & 0.2438 & 0.4694 & 0.8147 & 0.8059 \\
 & FairSync & 0.6985 & 0.6291 & 0.7629 & \underline{0.8419} & 0.7697 & \underline{0.8091} & 0.4863 & 0.4805 & \underline{0.4863} & 0.5475 & 0.2947 & 0.3958 & \underline{0.7621} & 0.7675 \\
 & ElasticRank & 0.7742 & 0.6377 & 0.8736 & 0.9005 & 0.8199 & 0.8490 & 0.5556 & 0.5547 & 0.5770 & 0.6162 & 0.2429 & \underline{0.3868} & 0.7937 & 0.8354 \\
  & \textbf{ManifoldRank (ours)} & \textbf{0.6608} & \textbf{0.6155} & \textbf{0.7363} & \textbf{0.8346} &\textbf{0.7495} & \textbf{0.7914} & \textbf{0.4539} & \textbf{0.3919} & \textbf{0.4608} & \textbf{0.5074} & \textbf{0.2226} & \textbf{0.3571} & \textbf{0.7065} & \textbf{0.7647} \\
 \midrule
\multirow{6}{*}{MMF} & CPFair & 0.0430 & 0.0840 & 0.0745 & 0.0664 & 0.0527 & 0.0358 & 0.1706 & 0.1767 & 0.1700 & 0.1618 & 0.3377 & 0.3041 & 0.0490 & 0.0487 \\
 & Min-regularizer & 0.0259 & 0.0567 & 0.0945 & 0.0554 & 0.0415 & 0.0361 & 0.1374 & 0.1871 & 0.1364 & 0.1317 & 0.2982 & 0.2990 & 0.0552 & 0.0491 \\
 & P-MMF & 0.0530 & \underline{0.1053} & 0.0957 & \underline{0.0771} & \underline{0.0637} & 0.0476 & 0.2074 & \underline{0.2304} & 0.1607 & 0.1815 & 0.3425 & \underline{0.3218} & 0.0424 & 0.0433 \\
 & FairSync & \underline{0.0688} & 0.0794 & \underline{0.1167} & 0.0764 & 0.0609 & \underline{0.0532} & \underline{0.2101} & 0.2111 & \underline{0.2140} & \underline{0.1830} & 0.3238 & 0.2586 & \underline{0.0721} & 0.0590 \\
 & ElasticRank & 0.0452 & 0.0915 & 0.0465 & 0.0284 & 0.0427 & 0.0319 & 0.1879 & 0.1902 & 0.1721 & 0.1533 & \underline{0.3577} & 0.3124 & 0.0680 & \underline{0.0640} \\
  & \textbf{ManifoldRank (ours)} & \textbf{0.0754} & \textbf{0.1096} & \textbf{0.1255} & \textbf{0.0799} & \textbf{0.0706} & \textbf{0.0620} & \textbf{0.2227} & \textbf{0.2617} & \textbf{0.2190} & \textbf{0.2005} & \textbf{0.3621} & 0.3158 & \textbf{0.0774} & \textbf{0.0704} \\
 \bottomrule
\end{tabular}
\end{table*}

\textbf{Evaluation.} The performance of the models is evaluated on two aspects: re-ranking accuracy and fairness degree. 
For accuracy, following~\cite{ElasticRank, xu2023p}, we use NDCG@K:
\begin{align}
   \text{NDCG@K} & =\frac{1}{|\mathcal{U}|} \sum_{u\in\mathcal{U}}\frac{\sum_{i\in\mathbf{L}_K^F(u)}s_{u,i}/\log(\textrm{rank}^F_i+1)}{\sum_{i\in\mathbf{L}_K(u)}s_{u,i}/\log(\textrm{rank}_i+1)},
\end{align}
where $\mathbf{L}_K(u_t)$ is the original ranked list and $\mathbf{L}^F_K(u_t)$ is the fair-aware re-ranked list, and rank$_i$ and rank$_i^F$ are the ranking positions of the item $i$ in $\mathbf{L}_K(u_t)$ and $\mathbf{L}_K^F(u_t)$, respectively.  

%The improved re-ranking accuracy results in a higher NDCG@K value and a lower Loss@K value, indicating better re-ranking quality.

% NDCG@K is defined as the ratio between the sum of position-based user-item scores~\cite{wu2021tfrom} in the original ranking list $\mathbf{L}_K(u_t)$ and that in the re-ranked list $\mathbf{L}^F_K(u_t)$:

To evaluate fairness, we use three distinct metrics: Elastic Fairness (EF)~\cite{ElasticRank}, the Gini Index (GINI)~\cite{nips21welf}, and Max-Min Fairness (MMF)~\cite{xu2023p}. We adopt the MMF@K formulation from \cite{nips21welf}, which quantifies the aggregated ranking score of the bottom 20\% ``worst-off'' groups, thereby measuring the system's ability to protect the most disadvantaged entities. Higher values for EF@K and MMF@K, coupled with a lower GINI@K, indicate a more fair distribution of provider groups.

% \begin{align}
%     EF@K = \int_{1-M}^{1+M} \frac{\text{sign}(1-t)\left(\sum_{g=1}^{|\mathcal{G}|} \bar{\bm{v}}_g^{1-t}\right)^{\frac{1}{t}}}{2M|\mathcal{G}|} dt, 
% \end{align}
% where we set $M=50$ and $\bar{v}_g$ is the normalized utilities for group $g$ under ranking size $K$. 

\textbf{Baselines.} The following representative online re-ranking models were chosen as baselines: 
\emph{CPFair}~\cite{cpfair} formulates the re-ranking task as a knapsack problem and employs a greedy strategy;
\emph{Min-regularizer}~\cite{xu2023p} introduces a regularization term that penalizes exposure gaps between target providers and worst-off groups;
\emph{P-MMF}~\cite{xu2023p} uses mirror gradient descent specifically to improve the utility of the worst-off item groups;
\emph{ElasticRank}~\cite{ElasticRank}, which determines re-ranking priority by calculating a fairness score based on the exposure elasticity between two item groups; and
\emph{FairSync}~\cite{fairsync}, which implements an online gradient descent framework designed to guarantee minimum exposure thresholds for every item group.

\textbf{Implementation details.} Our experiments are implemented using Python 3.9. All experiments are conducted on a server with Ubuntu 18.04. All our experiments are conducted based on the \emph{FairDiverse} toolkit~\cite{xu2025fairdiverse}.
As for the hyperparameters in all models, the fairness parameter $\beta, \alpha$ is tuned among $[-3,3]$. The demand parameter $a_e, a_s$ is tuned among $[0.1, 1]$.

\subsection{Experimental results}
We report on the performance of ManifoldRank and the baselines. Our main experiments are conducted based on the well-performing ranking base model SASRec~\cite{SASRec}.

\subsubsection{Fairness performance comparison.}
To ensure a rigorous evaluation, we follow the methodology established by ElasticRank. We conduct experiments to assess the trade-off between accuracy and fairness for ManifoldRank and our selected baselines. Specifically, we constrain the accuracy (measured by NDCG@K) to remain above 99\% of the base model's performance. This approach is motivated by the fact that an accuracy loss of less than 1\% is typically considered negligible in production environments. 

Comprehensive results are shown in Table~\ref{tab:EXP:main} details the experimental outcomes for ManifoldRank and the baseline methods. We use datasets with different data distributions to ensure the robustness of our model.
From Table~\ref{tab:EXP:main}, we can observe that ManifoldRank outperforms almost all baselines in terms of all fairness metrics on all datasets (except for 1 out of 42 settings), including EF@K, GINI@K, and MMF@K. The experimental results demonstrate that ManifoldRank achieves superior fairness performance at the same accuracy level, highlighting our effectiveness and robustness.

\begin{figure}[t]
    \centering
    \includegraphics[width=0.98\linewidth]{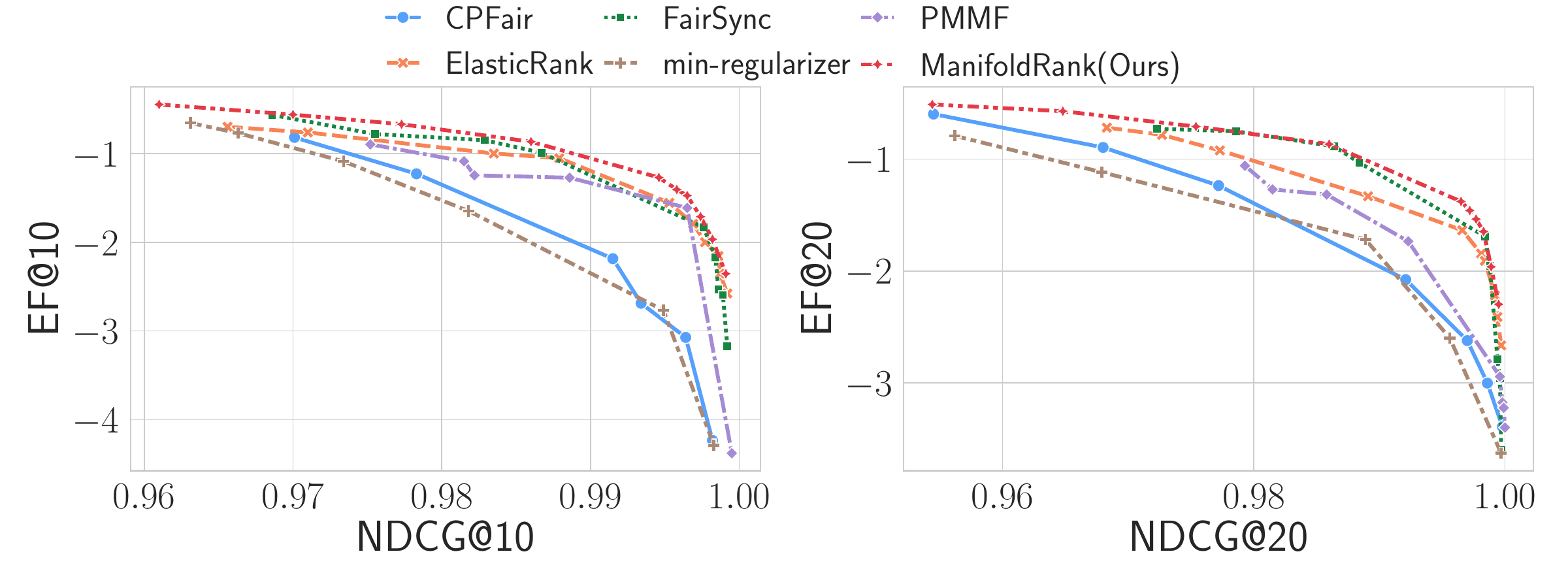}
    \caption{Pareto frontier with different size $K$ under Amazon industrial and scientific domains on SASRec. }
    \label{fig:Pareto}  
\end{figure}

%From Table~\ref{tab:EXP:main}, we first observe that \textbf{FairRec}~\cite{fairrec}, \textbf{FairRec+}~\cite{fairrecplus}, and \textbf{TaxRank}~\cite{TaxRank}, as item-level fairness methods, struggle to effectively balance accuracy and group-level fairness, often resulting in either insufficient accuracy or inadequate fairness. For the remaining baselines, the experimental results clearly demonstrate that ElasticRank achieves superior fairness performance at the same accuracy level, highlighting the effectiveness of our model.

%Next, we test the performance of ElasticRank and the baselines for different accuracy-fairness trade-off degrees and various fairness metrics by conducting experiments on the Steam dataset. Similar trends can be observed in the other two datasets.

\subsubsection{Performance on Pareto frontiers.} 
Figure~\ref{fig:Pareto} illustrates the Pareto frontiers~\cite{lotov2008visualizing} between the accuracy metric NDCG@K and the fairness metric EF@K at $K=10$ and $K=20$. Due to space limitations, we report results on the Amazon Industrial and Scientific domains; similar patterns are consistently observed across other datasets and base models.
The Pareto frontiers are obtained by systematically varying model parameters and selecting configurations that jointly optimize NDCG@K and EF@K, thereby characterizing the trade-off between ranking accuracy and item fairness. As expected, a clear trade-off emerges: improving fairness often comes at the cost of accuracy, and vice versa.
Compared with baseline methods, the proposed ManifoldRank consistently achieves Pareto-superior performance, as its curves occupy the upper-right region of the frontier. This dominance indicates that, for a given NDCG@K level, ManifoldRank attains higher EF@K, and for a given EF@K level, it delivers better NDCG@K. These results demonstrate the substantial advantage of ManifoldRank over existing baselines.

\subsection{Experimental analysis}

In this section, we report on an experimental analysis on Amazon Industrial and Scientific domains; similar patterns are consistently observed across other datasets and base models.

\subsubsection{Ablation study on different base models}\label{sec:ab4BM}
To verify the generalizability of ManifoldRank, we integrate it with three distinct base ranking models to obtain the preliminary preference scores $s_{u,i}$. These include: (i)~BPR~\cite{BPR}, a pairwise collaborative filtering model; (ii)~GRU4Rec~\cite{gru4rec}, a recurrent neural network designed for session-based patterns; and (iii)~SASRec~\cite{SASRec}, a transformer-based model that uses attention mechanisms for sequential modeling.

Table~\ref{tab:base_model} presents a comparative analysis of ManifoldRank against the two strongest baselines, FairSync and ElasticRank, across various base ranking architectures. Following the experimental configuration established in Table~\ref{tab:base_model}, we observe that ManifoldRank consistently yields superior performance. These results validate the robustness of our framework, demonstrating that its effectiveness is independent of the underlying base model. This performance gain is largely attributed to our gradient formulation, which explicitly integrates the base model's preference scores into the demand-side gradient calculation.

\begin{table}[ht]
\centering
\caption{Fairness performances for different base model BPR, GRU4Rec, SASRec. Also, we tuned various re-ranking models to achieve accuracy performances (NDCG) exceeding 99\%, and evaluated fairness across two cut-offs $K$ to assess three fairness metrics (EF@K).}
\label{tab:base_model}

\small
\setlength{\tabcolsep}{3pt}
\begin{tabular}{@{}l cc cc cc@{}}
\toprule
\multicolumn{1}{@{}l}{\multirow{2}{*}{Model}} & \multicolumn{2}{c}{BPR} & \multicolumn{2}{c}{GRU4Rec} & \multicolumn{2}{c}{SASRec} \\ 
\cmidrule(r){2-3} 
\cmidrule(r){4-5}
\cmidrule{6-7}
\multicolumn{1}{c}{} & \multicolumn{1}{c}{EF@10} & \multicolumn{1}{c}{EF@20} & \multicolumn{1}{c}{EF@10} & \multicolumn{1}{c}{EF@20} & \multicolumn{1}{c}{EF@10} & \multicolumn{1}{c}{EF@20} \\ \hline
FairSync  & -0.1861 & -0.5712 & -0.9317 & -1.0366 & -1.0610 & -1.0326\\
ElasticRank & -0.0136 & -0.1442 & -1.1096 & -1.1680 & -1.0514 & -1.3328\\
\textbf{ManifoldRank} & \textbf{-0.0120} & \textbf{-0.0412} & \textbf{-0.7876} & \textbf{-0.9635} & \textbf{-0.9124} & \textbf{-0.9088}\\ 
\bottomrule
\end{tabular}
\end{table}

\subsubsection{Ablation study on fairness parameters}

We conduct an ablation analysis to investigate the influence of the fairness parameters $\alpha$ and $\beta$. We evaluate in EF@10 and NDCG@K by varying each parameter individually while fixing its counterpart; see Figure~\ref{fig:ablation}.

\begin{figure}[h]
    \centering
    \includegraphics[width=0.98\linewidth]{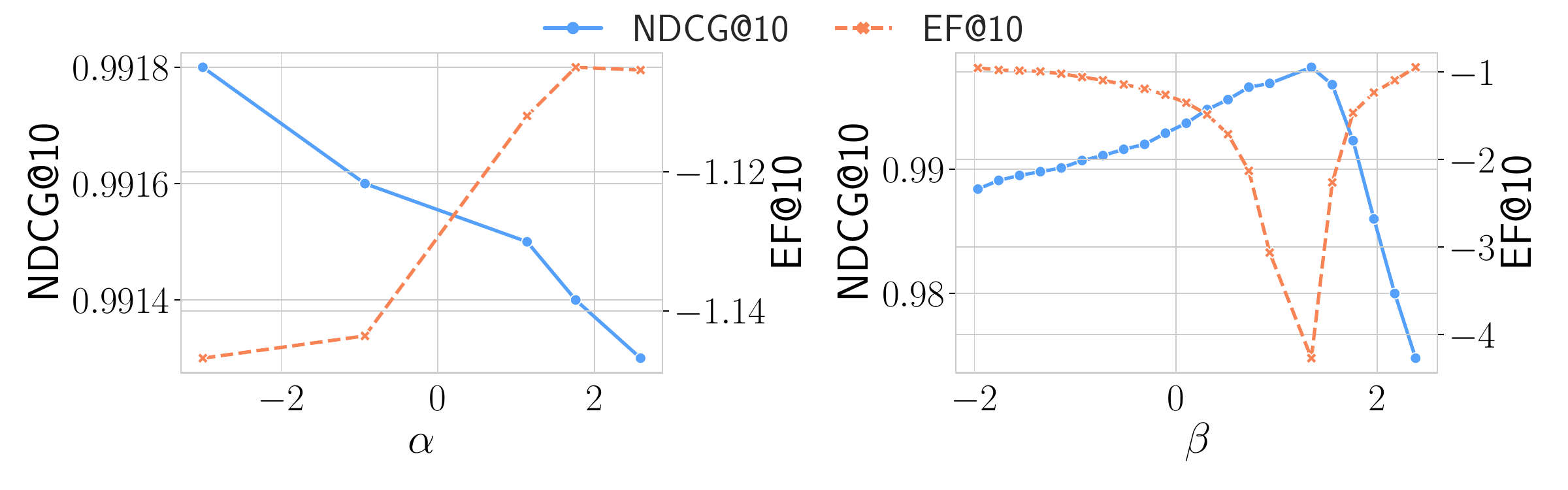}
    \caption{Ablation study on NDCG@K and EF@K performance \wrt fairness parameter $\alpha, \beta$. }
    \label{fig:ablation}  
\end{figure}

As illustrated in Figure~\ref{fig:ablation} (left), an increase in $\alpha$ leads to a corresponding improvement in the fairness metric EF@10, albeit at the expense of a decrease in NDCG@10. This observation aligns with the theoretical analysis presented in Section~\ref{sec:theory}, as $\alpha$ governs the global taxation degree; a larger $\alpha$ effectively amplifies the gradient associated with the fairness objective.

Second, Figure~\ref{fig:ablation} (right) illustrates a non-monotonic trend for the local fairness parameter $\beta$. As $\beta$ increases within the range $[-2, 1]$, we observe a decline in EF@10 accompanied by an improvement in NDCG@10. However, as $\beta$ continues to rise from $[1, 3]$, this trend reverses: EF@10 increases while NDCG@10 decreases. This behavior aligns with the theoretical framework in Section~\ref{sec:theory}; specifically, when $\beta < 1$, decreasing its value prioritizes the utility of the ``worst-off'' providers. Conversely, when $\beta > 1$, larger values impose a heavier penalty on ``rich'' providers, thereby progressively enhancing global fairness.

Finally, we observe that the model exhibits lower sensitivity to $\alpha$, suggesting that this parameter requires tuning over a broader range to induce significant performance shifts. In contrast, the local fairness parameter $\beta$ demonstrates a more pronounced impact on fairness metrics; this heightened sensitivity stems from its capacity to exert a larger influence on the gradient magnitude.

\subsubsection{Performance at each online step}

In this section, we provide a comparative analysis of the real-time accuracy and fairness trajectories between our proposed ManifoldRank and the strongest baseline, FairSync. The performance evolution is depicted in Figure~\ref{fig:online_step}.

\begin{figure}[h]
    \centering
    \includegraphics[width=0.98\linewidth]{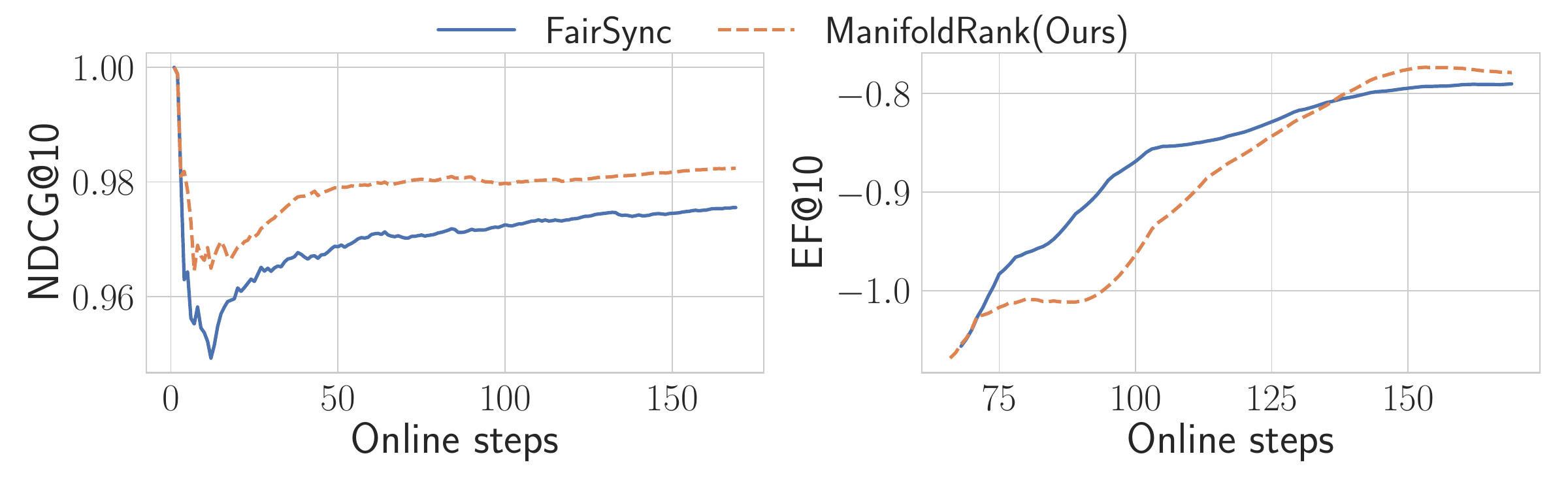}
    \caption{NDCG@K and EF@K performance \wrt online step. }
    \label{fig:online_step}  
\end{figure}

\noindent%
As illustrated in Figure~\ref{fig:online_step}, both ManifoldRank and FairSync initially prioritize the fairness objective, which leads to a transient decline in NDCG. However, as the online procedure progresses, both frameworks successfully recalibrate, demonstrating a simultaneous improvement in fairness and accuracy. 

Compared to FairSync, ManifoldRank consistently maintains superior accuracy throughout the online process (left panel), reflecting high utility on the demand side. Regarding fairness (right panel), while ManifoldRank initially exhibits a more gradual improvement trend within the first 140 steps, it eventually surpasses the baseline to achieve superior long-term fairness. These results underscore ManifoldRank's suitability for real-world applications; in large-scale deployment scenarios involving massive user bases and extended operational timelines.

%This convergence suggests that the models effectively learn to balance competing objectives, ultimately achieving a superior performance equilibrium.

\subsubsection{Influence on information of demand side}\label{exp:demand_info}

To investigate the influences of different statistical properties of the demand side vector $\bm{s}$, we tune ManifoldRank without the demand-side gradient $\zeta$ using different ranking scores, and frame a regression task in which the fairness metric EF@K is treated as the target variable. Then we denote by $f_i(\mathbf{s})$ the $i$-th statistical characteristic of $\mathbf{s}$, which is used as a regression feature and includes measures such as entropy, skewness, kurtosis, and related statistics:
$
    EF@K = a_if_i(s) + \epsilon,
$
where $\epsilon$ is the noise variable of regression.

Figure~\ref{fig:regression_res_plot} presents the estimated coefficients $a_i$ together with their confidence intervals, where ``mean'' and ``var'' in the legend denote the mean and variance across different users. From the figure, we observe that the mean values of Skewness and Entropy have a pronounced influence on the online re-ranking process, whereas the other factors remain close to zero, indicating limited impact. Therefore, in Section~\ref{sec:gradient_demand}, we select Skewness and Entropy as proxy properties of the manifold.

Meanwhile, the coefficient of Skewness is negative, whereas that of Entropy is positive. Higher entropy reflects greater diversity in user preferences, so enforcing fairness has little impact on performance. In contrast, higher skewness indicates concentrated preferences, and stronger taxation in this case may substantially degrade performance.

%We then use $f_i(\bm{s})$ to perform an ordered regression analysis on $y$ to identify the statistical measures that have the greatest impact on the re-ranking process:

% \[
% \max P(y \leq j \mid f_i(\bm{s})) = \frac{1}{1 + \exp\left(-(b_j - \sum_i f_i(\bm{s})a_i)\right)}, j=1,2, \cdots.
% \]

\begin{figure}
    \centering
    \includegraphics[width=\linewidth]{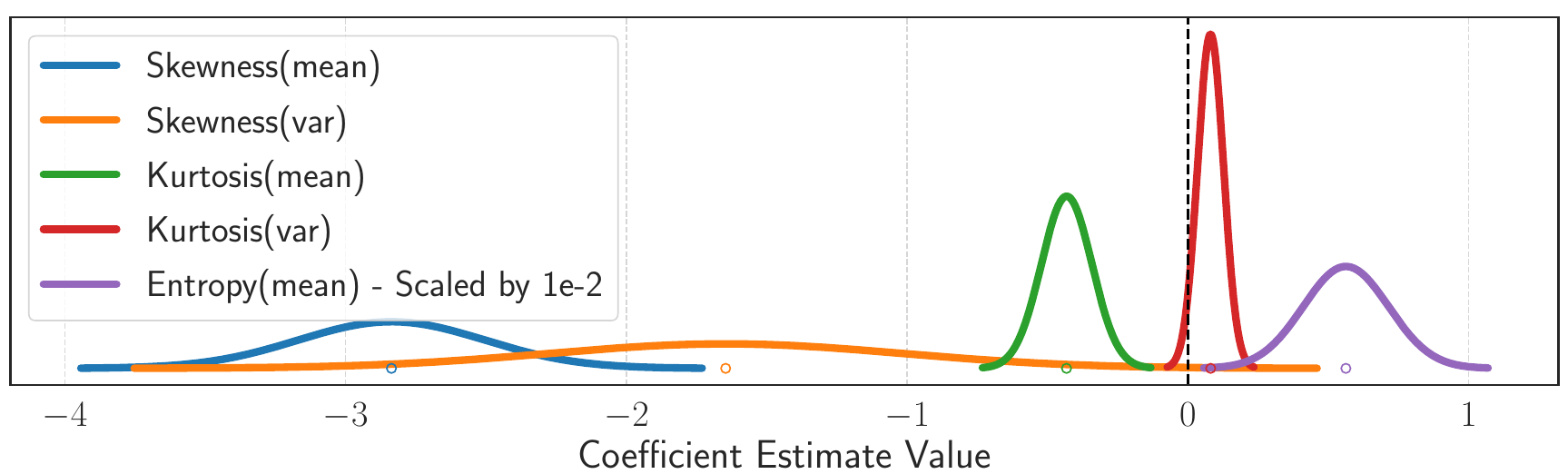}
    \caption{Influence of the statistical properties of the demand-Side vector $\bm{s}$, where ``mean'' and ``var'' in the legend denote the mean and variance across different users.}
    \label{fig:regression_res_plot}
    \vspace{-0.2cm}
\end{figure}
\section{Conclusion}

In this paper, we first formulate fair re-ranking as a Walrasian Equilibrium within an attention economy. We demonstrate that varying market contexts have a significant influence on the generalizability of online re-ranking gradients on the manifold. Building on these theoretical insights, we propose Manifold. This advanced online fair re-ranking algorithm dynamically adjusts gradients by accounting for both the supply and demand sides of the market. Extensive empirical evaluations show the effectiveness of ManifoldRank. As to limitations, our theory is still only based on a fixed ranking size under static settings. In future work, we will extend the theory into a dynamic and long-term re-ranking conditions.
%\mdr{What are the implications?}

%We are aware of the previous work
%While prior research has explored fair re-ranking through an economic lens \cite{TaxRank, ElasticRank}, this paper provides a more rigorous theoretical formalization of the underlying market dynamics. We demonstrate that fairness interventions are far more than isolated technical adjustments; rather, they represent a fundamental recalibration of the market’s distributive mechanics. 

%By establishing a formal framework, we analyze how supply-side perturbations—introduced by fairness constraints—propagate through the system to reshape the overall economic welfare of the ecosystem.

%As to limitations, our theory is still only based on a fixed ranking size under static settings. In future work, we will extend the theory into a dynamic and long-term re-ranking conditions.

%understand accuracy-fairness trade-offs in re-ranking by framing them as a commodity taxation transfer problem. By leveraging elasticity theory from economics, we reveal that these trade-offs are determined by the elasticity between inter-groups. Inspired by elasticity theory, we introduce the EF-Curve, a evaluation framework for fair re-ranking, alongside ElasticRank, an algorithm that consistently outperforms state-of-the-art baselines in both efficiency and effectiveness.

\begin{acks}
This research was (partially) supported by the National Natural Science Foundation of China - 62472426, the Dutch Research Council (NWO), under project numbers 024.004.022, NWA.1389.20.\-183, and KICH3.LTP.20.006, and the European Union under grant agreement No. 101201510 (UNITE).
Views and opinions expressed are those of the author(s) only and do not necessarily reflect those of their respective employers, funders, and/or granting authorities.
\end{acks}

\appendix
\section*{Appendix}

\section{Proof of Theorem~\ref{theo:law_of_demand}}\label{app:law_of_demand}

\begin{proof}
Based on the condition, we have:
$
    \frac{\partial X_g^d}{\partial f_{g}} = \mathbb{I}(i \in \mathcal{I}_g) \cdot \frac{\partial x_{u,i}^*}{\partial f_{g}},
$
where $x_{u,i}^*$ is the optimal attention to item $i$ by user $u$.

Now, the $x_{u,i}^*$ is decided by the $\sum_i (s_{u,i} - f_{g}) x_{u,i}$ subject to the budget and box constraints. This is a linear program (LP) problem and by the monotonicity of optimal solutions in LP, we have:
$
    \frac{\partial x_{u,i}^*}{\partial f_{g}} = -\lambda_u \leq 0,
$
where $\lambda_u \geq 0$ is the dual multiplier for the budget constraint $\sum_i x_{u,i} \leq K$. Thus, $\frac{\partial x_{u,i}^*}{\partial f_{g}} \le 0$, implying
$
    \frac{\partial X_g^d}{\partial f_{g}} \le 0.
$
\end{proof}

\section{Proof of Theorem~\ref{theo:taxation}}\label{app:taxation}

\begin{proof}
    First, the total supply cost can be written as:
    $r'(v) = \sum_g r(v_g) =  \sign(\alpha\beta)(\sum_i v_g^{\beta})^{\alpha}$.
    Let $G = \sum_g v_g^{\beta}$ ($v_g$), then we have:
    \begin{align*}
        \frac{\partial r'(v)}{\partial v_g} = \text{sign}(\alpha\beta)\alpha\beta G^{\alpha-1}v_g^{\beta-1}.
    \end{align*}
    Since $v_g \ge 0$, we have: 
    $
        G = \sum_g v_g^{\beta} \ge 0, v_g^{\beta-1} \ge 0,
    $
    then $\frac{\partial r'(v)}{\partial v_g} \ge 0$.
\end{proof}

\section{Proof of Theorem~\ref{theo:law_of_supply}}\label{app:law_of_supply}

\begin{proof}
    Similar to the proof of Theorem~\ref{theo:law_of_demand}, we have
    $    
    \frac{\partial X_g^s}{\partial f_{g}} = \mathbb{I}(i \in \mathcal{I}_g) \cdot \frac{\partial x_{u,i}^*}{\partial f_{g}},
    $
    And the $x_{u,i}^*$ is decided by the $\sum_i (f_{g} x_{u,i}) - v(v_g)$ subject to the constraints. By the monotonicity of LP, we have:
    $
    \frac{\partial x_{u,i}^*}{\partial f_{g}} = \lambda_u \ge 0,
    $
   where $\lambda_u \geq 0$ is the dual multiplier, implying
    $
        \frac{\partial X_g^s}{\partial f_{g}} \ge 0.
    $
\end{proof}

\section{Proof of Theorem~\ref{theo:walrasian}}
\label{app:walrasian}

\begin{proof}

\emph{Step 1:}
First, we will prove that $f_g^*$ is the price to make demand and supply $\{X_g^d, X_g^s\}$ Pareto optimal. 
Analogous to the proof strategy of the First Fundamental Theorem of Welfare Economics~\cite{ng1983welfare}, assuming there exists $\widetilde{X}^d, \widetilde{X}^s$ for the user that satisfies:
\[
   w_u (X^d) \leq w_u (\widetilde{X}^d),\forall u, \quad \exists u_k, w_{u_k} (X^d) < w_{u_k} (\widetilde{X}^d).
\]
Then we have the price $f_g^*$ for the supply side aims to maximize their attention gained:
$
    f^*_g \widetilde{X}^s_g \leq f^*_g X^s_g, \quad \sum_g f^*_g \widetilde{X}^s_g \leq \sum_g f^*_g X^s_g.
$
According to the market clear condition $\widetilde{X}^d_g=\widetilde{X}^s_g$, we have:
\begin{equation}\label{eq:supply_market_clear}
    \sum_g f^*_g \widetilde{X}^d_g \leq \sum_g f^*_g X^d_g.
\end{equation}
While for the demand side, $\exists u_k, w_{u_k} (X^d) < w_{u_k} (\widetilde{X}^d)$ must satisfy that $f^*_g X^d_g > f^*_g\widetilde{X}^d_g$, otherwise it does not satisfy the maximize user's utility condition:
\[
     w_{u_k} (X^d) = \sum_{i\in\mathcal{I}_g}(s_{u,i}-f_g^*)x_{u,i} < \sum_{i\in\mathcal{I}_g}(s_{u,i}-f_g^*)\widetilde{x}_{u,i} = w_{u_k} (\widetilde{X}^d).
\]
Such a condition will lead to
$
    \sum_g f^*_g \widetilde{X}^d_g > \sum_g f^*_g X^d_g,
$
which contradicts Eq.~(\ref{eq:supply_market_clear}).

\emph{Step 2:} Then we will show that the fair re-ranking process can be regarded as the Walrasian Equilibrium seeking process. 
Let $|\mathcal{U}|K = T$, the total social welfare function can be viewed as:
\begin{align*}
    W = \max_{x_{u,i}\in \mathcal{X}} &\sum_u w_u - \sum_g r(v_g), ~ \sum_g X^d_g = \sum_g X^d_s = T~ (\textit{Equilibrium}).
\end{align*}
Then we move the constraints to the objective using a vector of Lagrange multipliers $f_g$ (\ie the price for fair re-ranking):
\begin{equation}
         W = \max_{x_{u,i}\in \mathcal{X}}\left(\min_{f_g^d}d(f_g^d) + \min_{f_g^s}s(f_g^s)\right)
\end{equation}         
\begin{equation}
    \begin{aligned}
    \textrm{s.t. }  d(f_g^d) &= \sum_u w_u + \sum_g f_g^d(\sum_g X^d_g-T), &(\textit{Demand function}) \\
      d(f_g^s) &= \sum_g f_g^s(\sum_g X^d_s-T) - \sum_g r(v_g), &(\textit{Supply function}) \\
      f_g^s &= f_g^d.  &(\textit{Market condition})
    \end{aligned}
    \nonumber
\end{equation}
Such a demand function will lead to the conclusion.
\end{proof}

\section{Proof of Theorem~\ref{theo:taxation_with_fairness}}
\label{app:taxation_with_fairness}

\begin{proof}
    We define the aggregate production set $S$ as:
    $ S = \left\{ \sum_{g} X^s_g \right\} $. Define the aggregate ``better-than'' set $D$:
    $$ D = \left\{ \sum_{g}^I X_g^d \mid w_u(X_g^d) \geq w_u(X^*), \exists g, w_u(X_g^d) > w_u(X^*) \right\}.$$
    Since $X^*$ is Pareto optimal, the intersection of the interior of $D$ and the set $S$ must be empty:
    $ S \cap D = \emptyset. $
    By the Separating Hyperplane Theorem~\cite{mclennan2002ordinal}, there exists a non-zero price vector $f \in \mathbb{R}^{|\mathcal{G}|}$ and a scalar $k$:
    \begin{equation}
        f \cdot d \geq k \geq f \cdot s \quad \forall d \in D, \forall s \in D.
    \end{equation}
    Then this single price vector $f$ supports individual optimization for getting the optimal $X^s_g$.  And the parameters $\beta, \alpha$ will change the objective, eventually reach the optimal price $f$. Due to the constraints of the re-ranking setting, the total produced attention is bounded by the available slots $\sum_g X^s_g \leq |\mathcal{U}|K$. 
    
    Adjusting the price $f$ shifts the optimal attention allocation, which is effectively implemented through a single taxation mechanism $T_g$. In this context, $T_g > 0$ represents a tax levied on ``rich'' providers, while $T_g < 0$ means a subsidy for ``poor'' providers:
    \[
        \max\{0,T_g\} \leq X^s_g \leq X_g^0-\max\{0,T_g\}, \quad \sum_g T_g = 0,
    \]
    where $X_g^0$ denotes the attention produced without the fairness.
\end{proof}

\clearpage
\bibliographystyle{ACM-Reference-Format}
\balance
\bibliography{references}

\end{document}